\RequirePackage{fix-cm}
\documentclass{article}
\usepackage{pacf_preprint,times}

\usepackage[utf8]{inputenc}
\usepackage[T1]{fontenc}
\usepackage{hyperref}
\usepackage{url}
\usepackage{booktabs}
\usepackage{amsfonts}
\usepackage{nicefrac}
\usepackage{microtype}
\usepackage{xcolor}
\usepackage{multirow}
\usepackage{subcaption}
\usepackage{enumitem}
\usepackage{graphicx}
\usepackage{wrapfig}
\usepackage{amsmath,amssymb,amsthm,mathtools,bm}
\usepackage{algorithm}
\usepackage{algpseudocode}

\DeclareMathOperator{\GELU}{GELU}
\DeclareMathOperator{\LN}{LN}
\DeclareMathOperator{\Unif}{Unif}
\DeclareMathOperator{\Beta}{Beta}
\DeclareMathOperator{\supp}{supp}
\DeclareMathOperator{\tr}{tr}
\DeclareMathOperator{\RKfour}{RK4}

\newtheorem{theorem}{Theorem}
\newtheorem{proposition}[theorem]{Proposition}

\theoremstyle{remark}
\newtheorem{remark}[theorem]{Remark}

\title{Pathwise Alignment for \\
Collaborative Filtering: \\
Bilateral Conditional Flow Regularization}

\author{\parbox{\dimexpr\textwidth-2\tabcolsep\relax}{\centering\normalfont
Zhiwei Li$^{1}$, Guodong Long$^{1}$, Chengqi Zhang$^{2}$, and Jing Jiang$^{1}$\\[4pt]
\small $^{1}$Australian Artificial Intelligence Institute\\
Faculty of Engineering and Information Technology\\
University of Technology Sydney, Sydney, NSW, Australia\\[3pt]
$^{2}$Department of Data Science and Artificial Intelligence\\
The Hong Kong Polytechnic University, Hong Kong SAR, China\\[3pt]
\texttt{zhw.li@outlook.com}; \texttt{guodong.long@uts.edu.au}\\
\texttt{chengqi.zhang@polyu.edu.hk}; \texttt{jing.jiang@uts.edu.au}
}}
\arxivpreprintcopy

\hypersetup{hidelinks,
pdftitle={Pathwise Alignment for Collaborative Filtering: Bilateral Conditional Flow Regularization},
pdfauthor={Zhiwei Li, Guodong Long, Chengqi Zhang, Jing Jiang}}
\begin{document}
\maketitle

\begin{abstract}
Collaborative filtering predicts user-item interactions from implicit feedback.
We focus on matrix factorization (MF) trained with the Bayesian Personalized Ranking loss.
This loss updates each user embedding only through that user's observed items and sampled negatives, so sparse feedback gives MF few constraints.
To constrain each user through other users, we propose \textbf{P}athwise \textbf{A}lignment for \textbf{C}ollaborative \textbf{F}iltering (PACF), which trains MF with one network shared by all training pairs.
At points between user and item embeddings, a flow matching term trains the network to predict the user-to-item step, reversed and scaled for sampled negatives.
A trajectory consistency term pulls each user's predicted path to end at the observed item.
Both terms update the embeddings, so the network is discarded after training, thereby maintaining the efficiency of MF at inference.
We prove that a network with bounded sensitivity to its input point accurately fits users who share an item only if their embeddings are close, so these users constrain one another.
Against eleven discriminative and generative baselines, PACF achieves the highest performance on all seven datasets.
Its relative gains over MF are largest where feedback is sparsest.
The shared network thus supplies the missing constraints through other users.
\end{abstract}

\section{Introduction}

\begin{wrapfigure}[18]{r}{0.5\columnwidth}
  \centering
  \includegraphics[width=0.9\linewidth]{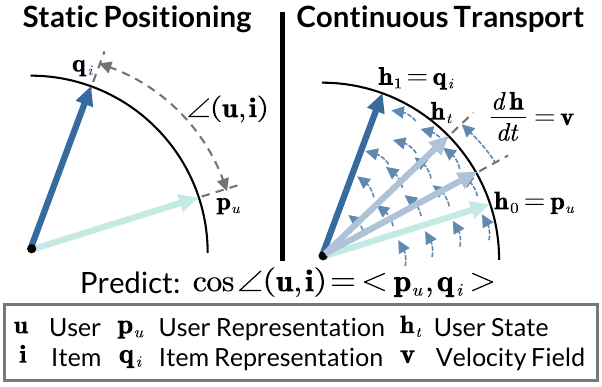}
  \caption{
Comparison of supervision in unit space.
The left panel shows scalar endpoint supervision.
The right panel shows PACF fitting signed displacements at sampled states along user-item segments. 
PACF thus supervises the whole path from each user to its item instead of one endpoint score.
  }
  \label{fig:sc}
\end{wrapfigure}

Collaborative Filtering (CF) learns user and item embeddings from sparse interactions and scores pairs by their inner product~\citep{goldberg1992using, papadakis2022collaborative, he2017neural}.
Pairwise ranking losses supervise these scores, and inference uses the learned embeddings for static similarity scoring~\citep{hsieh2017collaborative}.

A pairwise loss constrains the inner products of observed and sampled pairs, leaving the remaining embedding geometry to the model's inductive bias~\citep{steck2024cosine}.
Sparse feedback supplies few such constraints per embedding.
Graph encoders use interaction-graph propagation~\citep{he2020lightgcn}, while contrastive objectives impose constraints on the embedding distribution~\citep{wu2021self,yu2022simgcl}.

For Matrix Factorization (MF)~\citep{koren2009matrix} trained with Bayesian Personalized Ranking (BPR)~\citep{rendle2012bpr}, we study an inductive bias that constrains each user through other users.
To that end, \textbf{P}athwise \textbf{A}lignment for \textbf{C}ollaborative \textbf{F}iltering (PACF) augments the ranking loss with an auxiliary regression shared by all training pairs.
As shown in Fig.~\ref{fig:sc}, PACF places a directional target at sampled positions on the segment between the user and item embeddings of each training pair, pointing toward the item for an observed pair and back toward the user for a sampled negative pair.
This \textbf{pathwise alignment} learns the shared predictor jointly with the embeddings.

PACF sets this target to the user-to-item displacement at every position, with a sign set by the label, and we call this pair of signed targets \textbf{Bilateral Conditional Flow} (BCF).
PACF fits one velocity network to these targets under a flow matching loss~\citep{lipman2022flow} and ties the integrated endpoint to the observed item with a trajectory consistency term.
Both losses update the embeddings, so PACF discards the network after training and ranks candidates with static inner products.
When the network's sensitivity to its state input is bounded, PACF accurately fits users who share an item only if their embeddings are close, so each user draws constraints from the other users of its items.

Our contributions cover the training objective, its analysis, and its empirical evaluation:

\begin{itemize}
\item PACF reframes collaborative filtering as pathwise alignment, in which a Bilateral Conditional Flow regularizer turns scalar feedback into signed directional targets along user-item segments. One shared network fits these targets, and through gradient amortization its training signal stays in the static embeddings used for inner-product ranking.
\item A shared network with bounded state sensitivity provably fits users of the same item accurately only if their embeddings are close, which constrains each user through the others.
\item Against eleven discriminative and generative baselines on seven datasets from four domains, PACF achieves the highest performance on every dataset, with its largest relative gains over MF where feedback is sparsest, while maintaining the efficiency of MF at inference.
\end{itemize}

\section{Related Work}

\paragraph{Collaborative Filtering.}
CF learns user and item embeddings from observed interactions~\citep{shi2014collaborative,koren2021advances}. MF scores pairs by an inner product, and BPR trains these embeddings to rank an observed item above a sampled negative item~\citep{koren2009matrix,bokde2015matrix,rendle2012bpr}. Later methods enrich the scorer, as in Neural Collaborative Filtering (NCF)~\citep{he2017neural}, or the representations, through graph propagation~\citep{he2020lightgcn} and contrastive objectives on augmented views or neighbors~\citep{wu2021self,yu2022simgcl,zhang2025lightccf}. DirectAU aligns the embeddings of observed pairs and spreads them uniformly~\citep{wang2022towards,wang2020understanding}, while SCCF and RecDCL shape the embeddings through contrastive objectives~\citep{wu2024unifying,zhang2024recdcl}. 
Graph-regularized MF and UltraGCN tie the embeddings of related users and items through explicit graph constraints~\citep{rao2015collaborative,mao2021ultragcn}. 
PACF keeps the MF scorer and the BPR loss and adds a displacement regression shared by all training pairs, so one learned fit also constrains each user through the other users who share any of its items.

\paragraph{Flow Matching for Recommendation.}
Flow matching regresses a velocity field onto target velocities along conditional paths between coupled source and target points~\citep{lipman2022flow}, and an ordinary differential equation (ODE) integrates the field into trajectories~\citep{chen2018neural,finlay2020train}.
Each PACF training pair fixes this coupling, so the conditional path becomes the relevant design choice~\citep{peyre2019computational,pooladian2023multisample}.
Generative flow models use the field at inference to transform distributions~\citep{liu2023flow,gat2024discrete}, and recommender systems apply this design to interaction distributions, sequential preferences, and denoised social signals~\citep{li2025preference,huang2025flow}.
FlowCF applies it to binary interaction vectors with a behavior-guided prior~\citep{liu2025flow}.
PACF instead fits the field to signed user-to-item displacements to regularize static embeddings during training, then discards it and scores items by inner products.

\section{Preliminaries}

\subsection{Notations}
Let $\mathcal{U}$ and $\mathcal{I}$ denote the sets of users and items, and let $\mathcal{O}\subseteq\mathcal{U}\times\mathcal{I}$ collect the observed interactions, with $\mathcal{I}_u=\{i:(u,i)\in\mathcal{O}\}$ for user $u$.
We map each entity to a shared $d$-dimensional Euclidean space $\mathcal{M} = \mathbb{R}^d$.
Users and items are parameterized by embeddings $\mathbf{P}\in\mathbb{R}^{|\mathcal{U}|\times d}$ and $\mathbf{Q}\in\mathbb{R}^{|\mathcal{I}|\times d}$ with rows $\mathbf{p}_u$ and $\mathbf{q}_i$.
PACF is based on MF and randomly initializes $\mathbf{P}$ and $\mathbf{Q}$ as ID embedding tables.
Each training pair carries a label $y_{ui}\in\{+1,-1\}$.
A positive item $i^{+}$ of user $u$ satisfies $(u,i^{+})\in\mathcal{O}$ and $y_{ui}=+1$, and a sampled negative item $i^{-}$ is drawn from the unobserved items $\mathcal{I}\setminus\mathcal{I}_u$ with $y_{ui}=-1$. Each positive pair is trained with $K_{-}$ sampled negatives.
The interpolation time $t\in[0,1]$ indexes states on the segment from a user embedding at $t=0$ to an item embedding at $t=1$.

\subsection{Problem Formulation}
CF learns user and item embeddings from sparse feedback in the shared space $\mathcal{M}$.
Their inner product defines the score $s(u,i)$ that ranks all candidate items for each user at inference:
\begin{equation}\label{eq:pred}
    s(u,i) = \langle \mathbf{p}_u, \mathbf{q}_i\rangle.
\end{equation}
BPR~\citep{rendle2012bpr} ranks an observed item $i^+$ above a sampled negative $i^-$ through their score difference and minimizes the expected logistic loss over sampled training triplets $(u,i^+,i^-)$:

\begin{equation}
\label{eq:ranking_loss}
    \mathcal{L}_{\mathrm{rec}}
    =
    \mathbb{E}_{(u,i^+,i^-)}
    \left[ -\log\sigma\big(s(u,i^+)-s(u,i^-)\big) \right],
\end{equation}
where $\sigma$ is the sigmoid.
Eq.~\eqref{eq:ranking_loss} updates each user embedding only through the scores of that user's observed and sampled pairs, so sparse feedback gives the embedding few constraints.

\section{Methodology}
\label{sec:method}

\begin{wrapfigure}[20]{r}{0.6\columnwidth}
  \centering
  \includegraphics[width=1\linewidth]{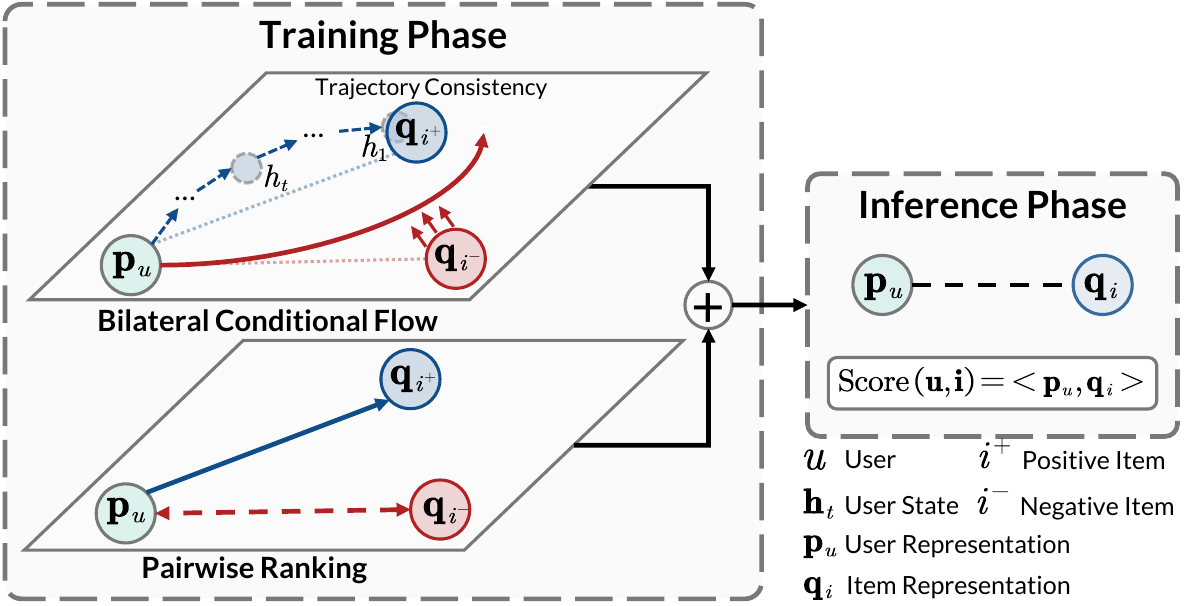}
  \caption{
    Framework of PACF.
    Left: \textbf{Training} fits the shared velocity network to the signed BCF targets at states $\mathbf{h}_t$ sampled on the reference segment.
    A trajectory consistency term on positive pairs pulls the RK4 endpoint $\Phi_1(\mathbf{p}_u)$ toward $\mathbf{q}_{i^+}$, so negative pairs enter the pathwise losses only through the flow matching term.
    Right: \textbf{Inference} discards the network and scores by inner product. PACF thus adds cost only during training and ranks with the same efficiency as MF.
  }
  \label{fig:framework}
\end{wrapfigure}

The recommendation loss $\mathcal{L}_{\mathrm{rec}}$ in Eq.~\eqref{eq:ranking_loss} shapes the user and item embeddings through the scores of observed and sampled pairs.
Fig.~\ref{fig:framework} shows the two losses PACF adds to each user-item segment, namely flow matching toward BCF targets at sampled states and trajectory consistency at the integrated endpoint. Their gradients update the static embeddings that Eq.~\eqref{eq:pred} scores during inference.

\subsection{Bilateral Conditional Flow}
\label{sec:bcf}

For a user $u$ and an item $i$ with label $y_{ui} \in \{+1, -1\}$, the intermediate state on the segment at time $t \in [0, 1]$ is the straight-line interpolation:
\begin{equation}
\label{eq:path}
\mathbf{h}_t \,\equiv\, \mathbf{h}^{(u,i)}_t = (1-t)\,\mathbf{p}_u + t\,\mathbf{q}_i.
\end{equation}
Its velocity $\dot{\mathbf{h}}_t=\mathbf{q}_i-\mathbf{p}_u$ is constant along the path.
The observed label sets the direction and scale of the signed BCF regression target, which is therefore also constant along the segment:
\begin{equation}
\label{eq:target_vector}
\mathbf{u}_t(\mathbf{h}_t \mid y_{ui}) = \mu(y_{ui}) \,(\mathbf{q}_i - \mathbf{p}_u),
\quad \text{where }
\mu(y) = \begin{cases} +1, & y = +1 \\ -\mu_-, & y = -1 \end{cases}.
\end{equation}
In Eq.~\eqref{eq:target_vector}, $\mu_- \in [0, 1]$ scales the target on negative pairs.
The BCF target points from the user embedding toward the item embedding for positive pairs and back toward the user for negative pairs.

\subsection{Pathwise Flow Matching}
\label{sec:flow_reg}

In PACF, one velocity network $\mathbf{v}_{\boldsymbol\theta}$ is shared by all training pairs and regressed onto the BCF target $\mathbf{u}_t(\mathbf{h}_t \mid y_{ui})$ in Eq.~\eqref{eq:target_vector}. On a positive pair this target is the reference-path velocity used by conditional flow matching~\citep{lipman2022flow}, and on a sampled negative pair it reverses the velocity of that pair's own segment and scales it by $\mu_-$. The flow matching loss is the expected squared error:
\begin{equation}
\label{eq:flow_loss}
\mathcal{L}_{\mathrm{flow}} = \mathbb{E}_{t,\,(u,i,y_{ui})}\!\left[
\bigl\| \mathbf{v}_{\boldsymbol\theta}(\mathbf{h}_t, t, \mathbf{q}_i, y_{ui}) - \mathbf{u}_t(\mathbf{h}_t \mid y_{ui}) \bigr\|_2^2
\right].
\end{equation}
In Eq.~\eqref{eq:flow_loss}, $t\sim\Unif[0,1]$ is drawn independently of the training pair, which ranges over each positive pair and its $K_{-}$ sampled negatives.
The target keeps its $(\mathbf{p}_u,\mathbf{q}_i)$ dependence during backpropagation, so $\mathcal{L}_{\mathrm{flow}}$ updates the embeddings as well as $\mathbf{v}_{\boldsymbol\theta}$, which serves only as a training-time regressor.

The velocity network is a three-layer residual multilayer perceptron (MLP) with hidden width $d_h$:
\begin{equation}
\label{eq:resmlp}
\mathbf{z} 
= 
\GELU\bigl(
  \mathbf{W}_1\LN([\mathbf{h}_t;\,\mathbf{q}_i;\,t;\,y_{ui}])
\bigr), 
\quad
\mathbf{v}_{\boldsymbol\theta} 
= 
\mathbf{W}_3\bigl(
  \mathbf{z} + \GELU(\mathbf{W}_2 \,\mathbf{z})
\bigr).
\end{equation}
Layer normalization (LN) acts on the input, and the Gaussian error linear unit (GELU) follows each hidden projection. 
The trainable weights are $\boldsymbol\theta = \{\mathbf{W}_1, \mathbf{W}_2, \mathbf{W}_3\}$.
The skip connection passes the hidden representation $\mathbf{z}$ directly to $\mathbf{W}_3$, and all training pairs share these weight matrices.

\subsection{Trajectory Consistency}
\label{sec:amortize}

The flow matching loss $\mathcal{L}_{\mathrm{flow}}$ in Eq.~\eqref{eq:flow_loss} supervises the field $\mathbf{v}_{\boldsymbol\theta}$ only at sampled reference states, while trajectory consistency integrates this field from $\mathbf{p}_u$ to $t=1$ and supervises the resulting endpoint.
For a positive pair $(u,i^+)$, the trajectory starts at $\widetilde{\mathbf{h}}_0=\mathbf{p}_u$ and follows the neural ODE:

\begin{equation}
\label{eq:node}
\dot{\widetilde{\mathbf{h}}}_t = \mathbf{v}_{\boldsymbol\theta}(\widetilde{\mathbf{h}}_t, t, \mathbf{q}_{i^+}, +1),
\end{equation}
with exact endpoint $\widetilde{\mathbf{h}}_1$. With $(\mathbf{q}_{i^+},+1)$ fixed, the fixed-step fourth-order Runge--Kutta (RK4) solver~\citep{butcher2016numerical} with $T$ steps approximates $\widetilde{\mathbf{h}}_1$ by $\Phi_1(\mathbf{p}_u)$ in all reported experiments.
Each RK4 step evaluates the field once at the current state, twice at midpoint estimates, and once at an endpoint estimate. 
The trajectory consistency loss $\mathcal{L}_{\mathrm{traj}}$ measures the squared distance between this endpoint and the observed item over positive pairs, whose BCF target leads to that item:
\begin{equation}
\label{eq:traj_constraint}
\mathcal{L}_{\mathrm{traj}} = \mathbb{E}_{(u, i^+)}\!\left[\,\big\|\Phi_1(\mathbf{p}_u) - \mathbf{q}_{i^+}\big\|_2^2\,\right].
\end{equation}
The solver backpropagates this endpoint residual through all $T$ RK4 steps to $\mathbf{p}_u$, $\mathbf{q}_{i^+}$, and $\boldsymbol\theta$.

\subsection{Training Objective}
\label{sec:obj}

Adding both pathwise losses to the ranking loss in Eq.~\eqref{eq:ranking_loss} gives the full training objective:
\begin{equation}
\label{eq:obj}
\min_{\{\mathbf{p}_u\},\,\{\mathbf{q}_i\},\,\boldsymbol\theta}
\mathcal{L}_{\mathrm{rec}}
+ \lambda\mathcal{L}_{\mathrm{flow}}
+ \beta\mathcal{L}_{\mathrm{traj}},
\end{equation}
where $\lambda,\beta\geq0$ weight the two auxiliary losses. Setting $\lambda=\beta=0$ recovers BPR and its gradients.

\subsection{Static Scoring through Gradient Amortization}
\label{sec:gradient_amortization}

\textbf{Gradient amortization} writes the fitting information of both auxiliary losses into the static embeddings through their training gradients, and these embeddings keep it after the velocity network and solver are discarded. The network and the RK4 solver thus add cost only during training.

\begin{proposition}[User-Side Gradient Decomposition]
\label{prop:user_gradient_decomposition}
For one training triplet $(u,i^+,i^-)$, let $\ell_{\mathrm{flow}}$ sum its sampled flow losses that depend on $\mathbf{p}_u$. Define $\Delta s=s(u,i^+)-s(u,i^-)$, $\mathbf{r}_u=\Phi_1(\mathbf{p}_u)-\mathbf{q}_{i^+}$, and $\mathbf{J}_{\Phi}(\mathbf{p}_u)=\partial\Phi_1/\partial\mathbf{p}_u$.
With the item embeddings and $\boldsymbol\theta$ fixed, the per-sample objective $\ell=-\log\sigma(\Delta s)+\lambda\ell_{\mathrm{flow}}+\beta\|\mathbf{r}_u\|_2^2$ has a negative gradient with one additive term per loss:
\begin{equation}
\label{eq:grad_traj_refined}
-\nabla_{\mathbf{p}_u}\ell
=
\sigma(-\Delta s)(\mathbf{q}_{i^+}-\mathbf{q}_{i^-})
-\lambda\nabla_{\mathbf{p}_u}\ell_{\mathrm{flow}}
-2\beta\mathbf{J}_{\Phi}(\mathbf{p}_u)^\top\mathbf{r}_u.
\end{equation}
\end{proposition}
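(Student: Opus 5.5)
The plan is to use linearity of the gradient and compute each of the three terms of $\ell$ separately, holding $\mathbf{q}_{i^+}$, $\mathbf{q}_{i^-}$ and $\boldsymbol\theta$ fixed, so that $\ell$ depends on $\mathbf{p}_u$ only. Since $\nabla_{\mathbf{p}_u}$ is linear, $-\nabla_{\mathbf{p}_u}\ell$ splits as the sum of the negative gradients of $-\log\sigma(\Delta s)$, $\lambda\ell_{\mathrm{flow}}$ and $\beta\|\mathbf{r}_u\|_2^2$. This gives one additive term per loss by construction. The flow term is left in the form $-\lambda\nabla_{\mathbf{p}_u}\ell_{\mathrm{flow}}$, so nothing further is needed there beyond noting that $\ell_{\mathrm{flow}}$ is differentiable in $\mathbf{p}_u$. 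This holds because $\mathbf{h}_t$ and the BCF target in Eq.~\eqref{eq:target_vector} are affine in $\mathbf{p}_u$, and $\mathbf{v}_{\boldsymbol\theta}$ in Eq.~\eqref{eq:resmlp} is a composition of smooth maps: LN with its stabilizing $\epsilon$, GELU, and linear layers.

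For the ranking term, $\Delta s=\langle\mathbf{p}_u,\mathbf{q}_{i^+}-\mathbf{q}_{i^-}\rangle$ by Eq.~\eqref{eq:pred}, so $\nabla_{\mathbf{p}_u}\Delta s=\mathbf{q}_{i^+}-\mathbf{q}_{i^-}$. Using $\frac{d}{dx}[-\log\sigma(x)]=-(1-\sigma(x))=-\sigma(-x)$ and the chain rule, the negative gradient is $\sigma(-\Delta s)(\mathbf{q}_{i^+}-\mathbf{q}_{i^-})$.

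For the trajectory term, $\nabla_{\mathbf{p}_u}\|\mathbf{r}_u\|_2^2=2(\partial\mathbf{r}_u/\partial\mathbf{p}_u)^\top\mathbf{r}_u$. Because $\mathbf{q}_{i^+}$ is held fixed, $\partial\mathbf{r}_u/\partial\mathbf{p}_u=\mathbf{J}_{\Phi}(\mathbf{p}_u)$. Multiplying by $-\beta$ gives $-2\beta\mathbf{J}_{\Phi}(\mathbf{p}_u)^\top\mathbf{r}_u$.

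The only point needing care is that $\mathbf{J}_{\Phi}$ exists. The map $\Phi_1$ is the composition of $T$ RK4 steps. Each step is the identity plus a finite linear combination of evaluations of $\mathbf{v}_{\boldsymbol\theta}$ at states that are themselves differentiable in the previous state. Since the conditioning inputs $(\mathbf{q}_{i^+},+1)$ and $t$ are fixed, each step is $C^1$, and by the chain rule so is $\Phi_1$. I expect this differentiability check to be the main obstacle, though it is a mild one. Adding the three pieces then yields Eq.~\eqref{eq:grad_traj_refined}.
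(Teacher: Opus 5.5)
Your proposal is correct and follows the paper's proof: both split $-\nabla_{\mathbf{p}_u}\ell$ by linearity, obtain the BPR term from $\partial\ell_{\mathrm{rec}}/\partial\Delta s=-\sigma(-\Delta s)$ and $\partial\Delta s/\partial\mathbf{p}_u=\mathbf{q}_{i^+}-\mathbf{q}_{i^-}$, leave the flow term as $-\lambda\nabla_{\mathbf{p}_u}\ell_{\mathrm{flow}}$, and differentiate $\|\mathbf{r}_u\|_2^2$ to get $-2\beta\mathbf{J}_{\Phi}(\mathbf{p}_u)^\top\mathbf{r}_u$. Your argument that $\mathbf{v}_{\boldsymbol\theta}$ and the composed RK4 map are $C^1$ goes slightly beyond the paper, which simply assumes differentiability.
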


The first term moves $\mathbf{p}_u$ along $\mathbf{q}_{i^+}-\mathbf{q}_{i^-}$ with $\sigma(-\Delta s)$, which grows as the score difference falls. 
The flow term differentiates the sampled regression errors through the state and the signed target, and the transpose of $\mathbf{J}_{\Phi}(\mathbf{p}_u)$, which describes how the initial user embedding moves the RK4 endpoint, returns $\mathbf{r}_u$ to $\mathbf{p}_u$. 
Optimization forms each embedding step from their sum, which reduces to the BPR gradient when the flow and endpoint residuals both vanish, so residuals carry all pathwise signal.

\section{Algorithmic Analysis}
\label{sec:analysis}

Flow matching updates the embeddings in proportion to the regression residual, through both the state that the network reads and the signed target. Written as a function of the state, the BCF target is $\mu(y)(\mathbf{q}_i-\mathbf{h}_t)/(1-t)$ for $t<1$, so its sensitivity to the state grows as $|\mu(y)|/(1-t)$ toward the item. One network with bounded sensitivity serves every user of the item, so users whose embeddings differ keep residuals where their states converge, and these residuals feed back into their embeddings.

\subsection{Shared Fitting at the Item End}
\label{sec:item_end}

For two users $u$ and $v$ paired with item $i$ under label $y$, the reference states at time $t$ differ by $(1-t)(\mathbf{p}_u-\mathbf{p}_v)$ and converge to $\mathbf{q}_i$, while the signed targets keep the difference $\mu(y)(\mathbf{p}_v-\mathbf{p}_u)$. A field with state sensitivity $L_{\mathbf h}$, its Lipschitz constant in $\mathbf{h}$, changes its output between the two states by at most $L_{\mathbf h}(1-t)\|\mathbf{p}_u-\mathbf{p}_v\|_2$, so the remaining target difference appears in their residuals.

\begin{proposition}[Item-End Compromise]
\label{prop:item_end}
Fix an item $i$, a label $y$, and $t\in[0,1)$.
For a user $u$ paired with $i$ under label $y$, let $\mathbf{e}_u(t)=\mathbf{v}_{\boldsymbol\theta}(\mathbf{h}^{(u,i)}_t,t,\mathbf{q}_i,y)-\mu(y)(\mathbf{q}_i-\mathbf{p}_u)$ be its flow residual.
If $\mathbf{v}_{\boldsymbol\theta}(\cdot,t,\mathbf{q}_i,y)$ is $L_{\mathbf h}$-Lipschitz between the states of two such users $u$ and $v$, their residuals satisfy:
\begin{equation}
\label{eq:item_end_pointwise}
\|\mathbf{e}_u(t)\|_2+\|\mathbf{e}_v(t)\|_2
\geq
\bigl(|\mu(y)|-L_{\mathbf h}(1-t)\bigr)_+\,\|\mathbf{p}_u-\mathbf{p}_v\|_2.
\end{equation}
Here $(x)_+=\max\{x,0\}$. 
If the Lipschitz condition holds pairwise on a set $\mathcal{S}$ of such users for all $t\in[0,1)$, with a common constant $L_{\mathbf h}>0$ satisfying $L_{\mathbf h}\geq|\mu(y)|$, then $t\sim\Unif[0,1]$ gives the following bound, where $\bar{\mathbf{p}}_{\mathcal{S}}$ is the mean of the users in $\mathcal{S}$, so the right side measures their dispersion:
\begin{equation}
\label{eq:item_end_dispersion}
\mathbb{E}_t\sum_{u\in\mathcal{S}}\|\mathbf{e}_u(t)\|_2^2
\geq
\frac{|\mu(y)|^3}{6L_{\mathbf h}}\sum_{u\in\mathcal{S}}\|\mathbf{p}_u-\bar{\mathbf{p}}_{\mathcal{S}}\|_2^2.
\end{equation}
\end{proposition}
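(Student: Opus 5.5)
The pointwise bound \eqref{eq:item_end_pointwise} follows from comparing the two residuals directly, and the dispersion bound \eqref{eq:item_end_dispersion} follows by squaring, summing over all pairs in $\mathcal{S}$, and integrating over $t$.

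\textbf{Pointwise bound.} Subtract the two residuals. The target parts differ by $\mu(y)(\mathbf{q}_i-\mathbf{p}_u)-\mu(y)(\mathbf{q}_i-\mathbf{p}_v)=\mu(y)(\mathbf{p}_v-\mathbf{p}_u)$. Hence
\[
\mathbf{e}_u(t)-\mathbf{e}_v(t)=\bigl[\mathbf{v}_{\boldsymbol\theta}(\mathbf{h}^{(u,i)}_t)-\mathbf{v}_{\boldsymbol\theta}(\mathbf{h}^{(v,i)}_t)\bigr]+\mu(y)(\mathbf{p}_u-\mathbf{p}_v),
\]
where the fixed arguments $t,\mathbf{q}_i,y$ are suppressed. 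The two states differ by $(1-t)(\mathbf{p}_u-\mathbf{p}_v)$. The Lipschitz hypothesis therefore bounds the bracket by $L_{\mathbf h}(1-t)\|\mathbf{p}_u-\mathbf{p}_v\|_2$. The reverse triangle inequality then gives
\[
\|\mathbf{e}_u-\mathbf{e}_v\|_2\ge\bigl(|\mu(y)|-L_{\mathbf h}(1-t)\bigr)\|\mathbf{p}_u-\mathbf{p}_v\|_2 .
\]
Since also $\|\mathbf{e}_u\|_2+\|\mathbf{e}_v\|_2\ge\|\mathbf{e}_u-\mathbf{e}_v\|_2\ge 0$, we may replace the factor by its positive part, which yields \eqref{eq:item_end_pointwise}.

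\textbf{Dispersion bound.} Write $c(t)=(|\mu(y)|-L_{\mathbf h}(1-t))_+$.

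First, square the pointwise bound. Using $a^2+b^2\ge\tfrac12(a+b)^2$,
\[
\|\mathbf{e}_u\|_2^2+\|\mathbf{e}_v\|_2^2\ge\tfrac12 c(t)^2\|\mathbf{p}_u-\mathbf{p}_v\|_2^2
\]
for every pair $u,v\in\mathcal{S}$.

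Second, sum this over all unordered pairs $\{u,v\}\subseteq\mathcal{S}$, assuming $|\mathcal{S}|=n\ge 2$ (for $n=1$ the right side of \eqref{eq:item_end_dispersion} vanishes and there is nothing to prove). Each $\|\mathbf{e}_u\|_2^2$ appears $n-1$ times on the left. On the right, use the identity
\[
\sum_{u<v}\|\mathbf{p}_u-\mathbf{p}_v\|_2^2=n\sum_{u}\|\mathbf{p}_u-\bar{\mathbf{p}}_{\mathcal{S}}\|_2^2 .
\]
This gives
\[
\sum_u\|\mathbf{e}_u(t)\|_2^2\ge\frac{n}{2(n-1)}\,c(t)^2\sum_u\|\mathbf{p}_u-\bar{\mathbf{p}}_{\mathcal{S}}\|_2^2 .
\]
Since $n/(n-1)\ge1$, the prefactor is at least $c(t)^2/2$.

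Third, take the expectation over $t\sim\Unif[0,1]$ and substitute $s=1-t$, which is also uniform on $[0,1]$. Because $L_{\mathbf h}\ge|\mu(y)|$, the cutoff $s=|\mu(y)|/L_{\mathbf h}$ lies in $[0,1]$. Therefore
\[
\mathbb{E}_t\,c(t)^2=\int_0^{|\mu|/L_{\mathbf h}}(|\mu|-L_{\mathbf h}s)^2\,ds=\frac{|\mu|^3}{3L_{\mathbf h}} .
\]
Multiplying by $\tfrac12$ gives the constant $|\mu(y)|^3/(6L_{\mathbf h})$ in \eqref{eq:item_end_dispersion}.

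\textbf{Main care point.} The only delicate step is the pair-summation bookkeeping. One must count each residual $n-1$ times and apply the pairwise-to-centroid identity correctly, and must not lose the positive part before squaring. The condition $L_{\mathbf h}\ge|\mu(y)|$ is needed only so that the integral's cutoff falls inside $[0,1]$.
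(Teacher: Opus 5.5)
Your proposal is correct and follows the paper's proof. Both use the same residual-difference decomposition with the Lipschitz bound for the pointwise inequality, then $(a+b)^2\le 2(a^2+b^2)$, the closed-form integral $|\mu(y)|^3/(3L_{\mathbf h})$, and the pairwise-to-centroid identity with $n/(n-1)\ge 1$; the only differences are that you sum over pairs before taking $\mathbb{E}_t$ (the paper integrates first, which is equivalent by linearity) and that you phrase the first step through the reverse triangle inequality on $\mathbf{e}_u-\mathbf{e}_v$.
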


Prop.~\ref{prop:item_end} gives a necessary condition for small flow error at fixed state sensitivity, and its bound falls as these embeddings move together or as $L_{\mathbf h}$ grows.
For $t<1$ the complete input determines $\mathbf{p}_u$, so a field of unrestricted sensitivity could fit each user separately, and the coupling comes from the bounded state sensitivity of the shared network.
The state sensitivity $|\mu(y)|/(1-t)$ of the target exceeds $L_{\mathbf h}$ exactly where the coefficient in Eq.~\eqref{eq:item_end_pointwise} is positive. For $y=+1$, $\mathcal{S}$ holds observed users of item $i$, and for $y=-1$, users for whom $i$ is a sampled negative. 
App.~\ref{app:proofs} proves both results.

\subsection{Complexity}
\label{sec:complexity}

\textbf{Space.}
The embedding tables $\mathbf{P}$ and $\mathbf{Q}$ contain $O((|\mathcal{U}|+|\mathcal{I}|)d)$ parameters. The input and output projections of the residual MLP in Eq.~\eqref{eq:resmlp} contribute $O(dd_h)$ parameters, and its hidden projection contributes $O(d_h^2)$. Total parameter storage is $O((|\mathcal{U}|+|\mathcal{I}|)d+d_h(d+d_h))$. The shared field adds a cost independent of user and item counts. Training also stores optimizer states and activations for differentiating the field and RK4 steps. Deployment retains only the embedding tables.

\textbf{Training time.}
One field evaluation costs $O(d_h(d+d_h))$. For a batch of $B$ positive pairs with $K_{-}$ negatives per positive, flow matching evaluates $B(1+K_{-})$ sampled states. Trajectory consistency integrates only the positive pairs, using four field evaluations per RK4 step and $T$ steps per pair. These losses therefore require $B(1+K_{-}+4T)$ field evaluations and cost $O((1+K_{-}+4T)B\,d_h(d+d_h))$ beyond the BPR loss. Backpropagation through this fixed computation graph has the same asymptotic order. For example, $K_{-}=4$ and $T=4$ give 21 field evaluations per positive pair. For fixed model dimensions, $K_{-}$, and $T$, the cost of one pass is linear in the number of training positive pairs.

\textbf{Inference time.}
PACF discards the velocity network and solver after training. Scoring uses the inner product of the static embeddings in Eq.~\eqref{eq:pred}, costing $O(d)$ per user-item pair and $O(|\mathcal{I}|d)$ per user. Deployment therefore uses the same scoring operations as MF at the same embedding dimension.

\section{Experiments}
\label{sec:experiments}

\subsection{Baselines}
We compare PACF with eleven baselines from two families. \textit{Discriminative Models}: \textbf{BPR-MF}~\citep{rendle2012bpr} learns pairwise rankings, \textbf{NCF} and \textbf{NeuMF}~\citep{he2017neural} use a multi-layer perceptron and a hybrid of linear and non-linear branches as the interaction function, \textbf{LightGCN}~\citep{he2020lightgcn} propagates signals through neighborhood aggregation, \textbf{SGL}~\citep{wu2021self} and \textbf{SimGCL}~\citep{yu2022simgcl} add contrastive objectives through graph augmentation and embedding noise, and \textbf{LightCCF}~\citep{zhang2025lightccf} learns neighborhood aggregation through a contrastive objective. \textit{Generative Models}: \textbf{MultiVAE}~\citep{liang2018variational} and \textbf{RecVAE}~\citep{shenbin2020recvae} model implicit feedback with variational autoencoders, \textbf{DiffRec}~\citep{wang2023diffusion} recovers interaction vectors from noise by diffusion, and \textbf{FlowCF}~\citep{liu2025flow} fits binary interaction vectors by flow matching with a behavior-guided prior, which is the closest flow-based baseline to PACF.

\subsection{Datasets}
\begin{wraptable}[15]{r}{0.55\columnwidth}
\centering
\caption{
  Statistics of the seven datasets used in our experiments.
  Sparsity is the fraction of user-item pairs without any observed interaction in the data.}
\label{tab:datasets}
\resizebox{1\linewidth}{!}{
\begin{tabular}{lcccc}
\toprule
Dataset  & \#Users & \#Items & \#Interactions & Sparsity \\ \midrule
ML       & 610     & 3,650    & 90,274     & 95.95\%  \\
Food     & 5,990    & 1,125    & 36,482     & 99.46\%  \\
Dance    & 10,231   & 1,676    & 80,086     & 99.53\%  \\
Yelp     & 31,668   & 38,048   & 1,561,406   & 99.87\% \\
Baby     & 19,445   & 7,050    & 161,276    & 99.88\%  \\
Sports   & 35,598   & 18,357   & 296,337    & 99.95\%  \\
Clothing & 39,387   & 23,033   & 278,677    & 99.97\%  \\ \bottomrule
\end{tabular}
}
\end{wraptable}
We evaluate PACF on seven public datasets from four domains, and Tab.~\ref{tab:datasets} summarizes their statistics.
For movies, we use \textbf{MovieLens-latest-small} (ML)\footnote{https://grouplens.org/datasets/movielens/latest/}~\citep{harper2015movielens}.
Food and Dance denote Bili\_Food and Bili\_Dance from \textbf{NineRec}\footnote{https://github.com/westlake-repl/NineRec}~\citep{zhang2024ninerec}, used for short-video recommendation.
The \textbf{Yelp2018} (Yelp)\footnote{https://www.kaggle.com/datasets/yelp-dataset/yelp-dataset} benchmark covers business recommendation~\citep{he2020lightgcn}.
From the 2014 \textbf{Amazon review data}\footnote{https://cseweb.ucsd.edu/\textasciitilde jmcauley/datasets/amazon/links.html}~\citep{mcauley2015image,he2016ups}, we use the 5-core subsets of the Baby, ``Sports and Outdoors'', and ``Clothing, Shoes and Jewelry'' categories for e-commerce, denoted Baby, Sports, and Clothing.
We treat every observed interaction as positive implicit feedback, and pair each positive item with $K_{-}=4$ negative items sampled uniformly from the unobserved pool.
All datasets keep users and items with at least five observed interactions.

\subsection{Experimental Setup}
\label{sec:setup}

\textbf{Evaluation Protocols and Metrics.}
For leave-one-out evaluation~\citep{he2017neural}, we reserve each user's latest interaction for testing and the second latest for validation.
Over all candidate items, Recall@10 measures whether the held-out item enters the top ten, and Normalized Discounted Cumulative Gain (NDCG@10) also rewards higher placement.
Main-comparison scores are five-seed means. Validation NDCG@10 selects each ablation control in its own search, so every control is tuned for its design choice, and the PACF entries in these ablations repeat the main comparison.

\textbf{Hyperparameter Configuration.}
For the main comparison, we use Bayesian optimization with a budget of 500 trials for each method and dataset.
We tune the learning rate over $[10^{-5}, 10^{-1}]$ on a logarithmic scale, and weight decay over $[10^{-8}, 10^{-4}]$.
The embedding dimension $d$ and the hidden width $d_h$ of the velocity network are both chosen from $\{64, 128, 256, 512\}$.
For the flow dynamics, we search for the number of integration steps $T \in \{1, 4, 8, 16\}$.
We tune the regularization weights $\lambda \in [0.1, 1.0]$ and $\beta \in [0, 0.5]$, and the negative target scale $\mu_- \in [0, 1]$.
We tune each baseline over the ranges recommended in its original paper.
App.~\ref{sec:more_exp} gives the ablation protocols, initialization, hardware, selected settings, training and inference costs, and five-seed standard deviations.

\begin{table*}[!t]
\centering
\caption{
Comparison on seven datasets, measured by Recall@10 (Recall) and NDCG@10 (NDCG) in percent. 
\textbf{Bold} and \underline{underline} mark the best and second-best results for each dataset and metric. PACF has the best result in every column, and its relative gain over BPR-MF grows as sparsity rises.
}
\label{tab:main}
\begingroup
\fontsize{8.5}{9.5}\selectfont
\setlength{\tabcolsep}{0pt}
\newcommand{\mainmetrichead}[1]{\makebox[\dimexpr.90\linewidth/14\relax][c]{#1}}
\begin{tabular}{@{}l*{14}{c}@{}}
\toprule
         & \multicolumn{2}{c}{\textbf{ML}}   & \multicolumn{2}{c}{\textbf{Food}} & \multicolumn{2}{c}{\textbf{Dance}} & \multicolumn{2}{c}{\textbf{Yelp}} & \multicolumn{2}{c}{\textbf{Baby}} & \multicolumn{2}{c}{\textbf{Sports}} & \multicolumn{2}{c}{\textbf{Clothing}} \\
\makebox[.10\linewidth][l]{Method} & \mainmetrichead{Recall}          & \mainmetrichead{NDCG}            & \mainmetrichead{Recall}          & \mainmetrichead{NDCG}            & \mainmetrichead{Recall}           & \mainmetrichead{NDCG}            & \mainmetrichead{Recall}          & \mainmetrichead{NDCG}            & \mainmetrichead{Recall}          & \mainmetrichead{NDCG}            & \mainmetrichead{Recall}           & \mainmetrichead{NDCG}             & \mainmetrichead{Recall}            & \mainmetrichead{NDCG}              \\ \midrule
BPR-MF   & 5.59 & 2.66 & 6.91 & 3.36 & 6.62 & 3.29 & 2.77 & 1.37 & 1.42 & 0.67 & 1.48 & 0.74 & 0.64 & 0.30 \\
NeuMF    & 6.18 & 3.11 & 7.38 & 3.55 & 6.66 & 3.31 & 3.11 & 1.54 & 1.43 & 0.68 & 1.49 & 0.74 & 0.65 & 0.31 \\
NCF      & 6.39 & 3.24 & 7.57 & 3.63 & 6.72 & 3.37 & 3.19 & 1.60 & 1.68 & 0.82 & 1.48 & 0.75 & 0.68 & 0.32 \\
LightGCN & 6.52 & 3.16 & 7.44 & 3.60 & 6.19 & 3.09 & 3.26 & 1.62 & 1.37 & 0.66 & 1.34 & 0.67 & 0.58 & 0.27 \\
SGL      & 6.72 & 3.35 & 9.08 & 4.48 & 9.23 & 4.83 & 4.23 & 2.10 & 2.16 & 1.11 & 2.29 & 1.19 & 1.30 & 0.69 \\
SimGCL   & 6.13 & 2.98 & 9.46 & \underline{4.71} & 10.17 & 5.17 & 4.40 & 2.19 & \underline{2.58} & 1.32 & 3.45 & 1.82 & 2.56 & 1.35 \\
LightCCF & \underline{7.38} & \underline{3.56} & \underline{9.56} & 4.68 & \underline{10.21} & \underline{5.18} & \underline{4.46} & \underline{2.21} & 2.50 & \underline{1.40} & \underline{3.48} & \underline{1.91} & \underline{2.67} & \underline{1.47} \\ \midrule
MultiVAE & 6.27 & 3.09 & 8.36 & 4.09 & 8.37 & 4.24 & 3.54 & 1.77 & 1.67 & 0.80 & 1.36 & 0.69 & 1.49 & 0.77 \\
RecVAE   & 6.82 & 3.14 & 8.02 & 3.93 & 8.74 & 4.56 & 3.62 & 1.81 & 1.61 & 0.80 & 1.60 & 0.81 & 1.47 & 0.79 \\
DiffRec  & 6.82 & 3.54 & 7.34 & 3.53 & 8.65 & 4.26 & 3.08 & 1.53 & 1.57 & 0.72 & 1.58 & 0.85 & 0.65 & 0.30 \\
FlowCF   & 7.31 & 3.45 & 8.36 & 4.31 & 9.05 & 4.79 & 3.56 & 1.72 & 1.79 & 0.91 & 1.98 & 1.08 & 1.22 & 0.66 \\ \midrule
PACF     & \textbf{7.48} & \textbf{3.73} & \textbf{9.75} & \textbf{4.79} & \textbf{10.32} & \textbf{5.25} & \textbf{4.49} & \textbf{2.23} & \textbf{2.63} & \textbf{1.42} & \textbf{3.69} & \textbf{1.97} & \textbf{2.88} & \textbf{1.59} \\ \bottomrule
\end{tabular}
\endgroup
\end{table*}

\subsection{Performance Analysis}
\label{sec:performance}

PACF achieves the best result on every dataset in Tab.~\ref{tab:main}. It keeps the scorer and the ranking loss of BPR-MF, so this advantage lies entirely in the embeddings that its pathwise losses train. LightCCF or SimGCL holds the second-best result in every column, and both share information across users through graph aggregation and a contrastive objective. PACF exceeds them with the plain MF scorer, so its cross-user constraints act only during training and the deployed model stays an inner-product scorer. Every generative baseline trails both methods on the three sparse Amazon datasets.

The gain over BPR-MF rises steadily from ML, the densest dataset, to Clothing, the sparsest. BPR constrains a user only through its own interactions, which sparsity removes, while the shared fit draws on every user of the same items. Clothing also gives PACF its widest relative margin over the strongest baseline. PACF adds constraints rather than scorer capacity, and on sparse data the constraints matter more. NeuMF adds a learned nonlinear interaction to MF and stays close to BPR-MF on the three Amazon datasets, where PACF, with the plain inner product, leads by a wide margin. PACF also exceeds FlowCF on every dataset, and unlike FlowCF its flow regularizes the same user and item embeddings that the inner product later scores for every candidate item.

\subsection{Ablation Study}
\label{sec:ablation}

\subsubsection{Comparison with Endpoint Regularization}
\label{sec:ablation_endpoint}

\begin{wraptable}[14]{r}{0.42\textwidth}
  \centering
  \caption{
  Comparison of PACF with BPR and two endpoint penalties on ML and Food, measured by Recall@10 (Recall) and NDCG@10 (NDCG) in percent. PACF regresses one shared field along each user-item segment, while both penalties act only on endpoint distances.
  }
  \label{tab:endpoint_controls}
  \resizebox{1\linewidth}{!}{
  \begin{tabular}{llcccc}
  \toprule
  Dataset               & Metric & BPR  & EP$^{+}$ & EP$^{\pm}$ & PACF \\ \midrule
  \multirow{2}{*}{ML}   & Recall & 6.06 & 6.89     & 6.72       & 7.48 \\
                      & NDCG   & 3.05 & 3.14     & 3.28       & 3.73 \\ \midrule
  \multirow{2}{*}{Food} & Recall & 7.45 & 8.94     & 9.19       & 9.75 \\
                      & NDCG   & 3.60 & 4.45     & 4.59       & 4.79 \\ \bottomrule
  \end{tabular}
  }
  \end{wraptable}

Penalizing the length of each user-to-item displacement is the most direct way to use it, and Tab.~\ref{tab:endpoint_controls} tests this alternative. \textbf{EP$^{+}$} adds the squared distance of each positive pair to \textbf{BPR}, and \textbf{EP$^{\pm}$} also adds a margin against sampled negatives~\citep{hsieh2017collaborative}. Both penalties improve on BPR, so pulling users toward their items already adds useful constraints to MF.

On positive pairs, EP$^{+}$ is the special case of PACF in which the network predicts nothing. The flow error of a positive pair is then exactly its squared distance, and the integrated endpoint stays at the user embedding, so both pathwise losses collapse to the EP$^{+}$ penalty.
The field changes where each user is pulled. EP$^{+}$ pulls every user straight toward each of its items, one pair at a time, while PACF pulls through the residual of one shared fit. Near an item, the states of all its users nearly coincide, so the shared network predicts roughly their average displacement, and the residual of each user becomes its offset from the average of its co-users. This residual pulls the user toward that average and shrinks the spread of users around the item that Prop.~\ref{prop:item_end} bounds from below.

Users who share an item thus move toward one another, so an item adopted by one of them scores higher for the others. EP$^{\pm}$ adds a push away from sampled negatives within each pair, while the negative pairs of PACF join the same shared fit through their reversed and scaled targets.

\begin{figure}[!t]
\centering
\begin{minipage}[t]{0.475\linewidth}
  \centering
  \includegraphics[width=\linewidth]{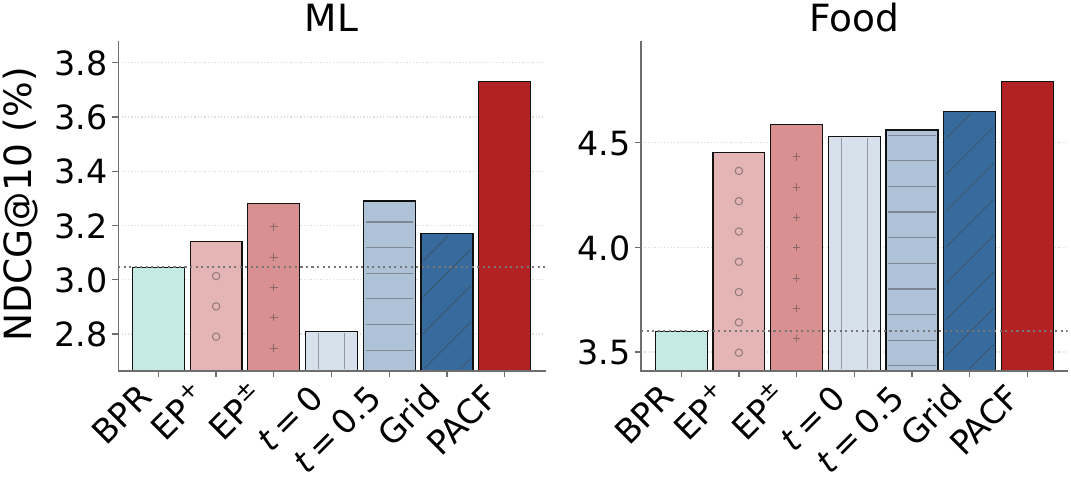}
  \caption{Sampling supports of $t$ in $\mathcal{L}_{\mathrm{flow}}$ on ML and Food, shown with BPR and both endpoint penalties and measured by NDCG@10 in percent. The dotted line marks the BPR level. PACF samples $t$ uniformly over the segment.}
  \label{fig:supervision_density}
\end{minipage}\hfill
\begin{minipage}[t]{0.505\linewidth}
  \centering
  \includegraphics[width=\linewidth]{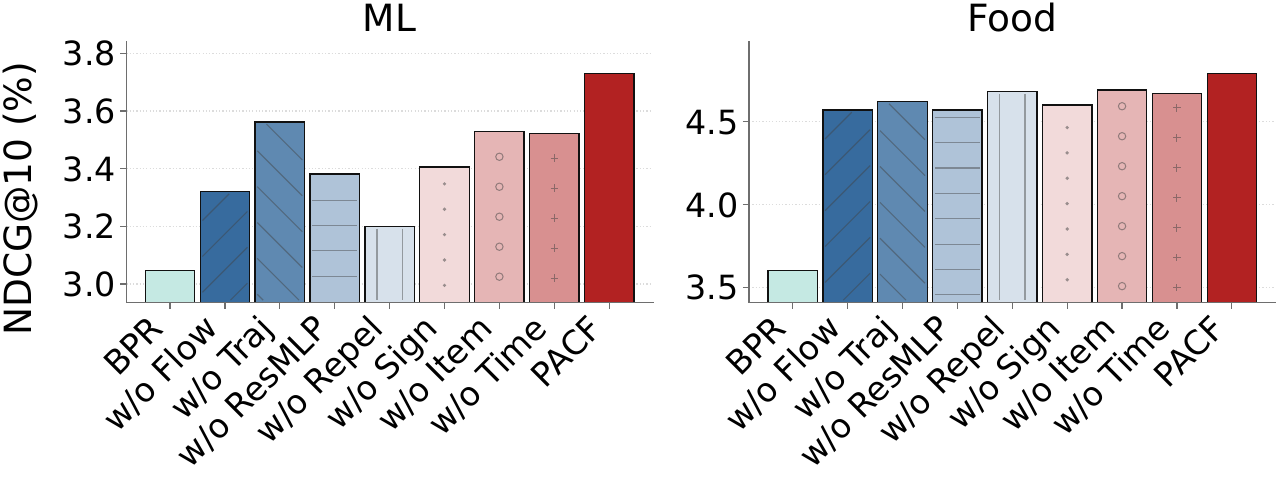}
  \caption{Removal of one PACF component at a time on ML and Food, measured by NDCG@10 in percent against BPR. PACF feeds the complete input $(\mathbf{h}_t,\mathbf{q}_i,t,y_{ui})$ to one residual field and trains it with both pathwise losses on all training pairs.}
  \label{fig:component_ablation}
\end{minipage}
\end{figure}

\subsubsection{Sampling Positions Along the User-Item Path}
\label{sec:ablation_density}

Fig.~\ref{fig:supervision_density} fixes the time of the reference state in Eq.~\eqref{eq:flow_loss}, where \textbf{$t{=}0$} places it at the user end, \textbf{$t{=}0.5$} at the midpoint, and \textbf{Grid} at an anchor drawn from three evenly spaced interior points, while PACF samples the whole segment. Moving the support from the user end to the midpoint, nearer the item, helps.

The BCF target is constant along a segment, so the support only chooses the state the network reads. At the user end every user presents its own embedding, and the network fits each displacement separately. Toward the item, the states of co-users converge while their targets stay apart. At the midpoint these states lie half as far apart as the embeddings. Next to the item, the target changes with the state faster than the network can follow, and Prop.~\ref{prop:item_end} bounds their residuals from below by the spread of these users. Uniform sampling keeps a fixed share of reference states in this region.

\subsubsection{Choice of Flow-Matching Target}
\label{sec:ablation_target}

\begin{wrapfigure}[15]{r}{0.5\columnwidth}
  \centering
  \includegraphics[width=\linewidth]{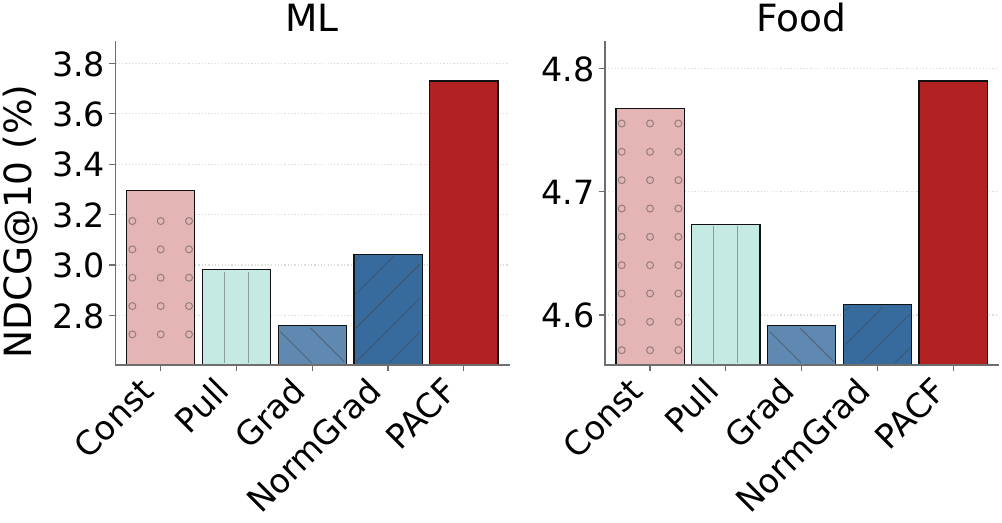}
  \caption{Comparison of flow-matching targets on ML and Food, measured by NDCG@10 in percent. The four target variants share one tuning protocol. PACF reports the separately tuned main result.}
  \label{fig:target_structure_ndcg}
\end{wrapfigure}

Fig.~\ref{fig:target_structure_ndcg} compares four targets for $\mathcal{L}_{\mathrm{flow}}$. \textbf{Const} uses the signed displacement in Eq.~\eqref{eq:target_vector}. \textbf{Pull} points from the sampled state toward the item under both labels. \textbf{Grad} uses the signed score gradient with respect to the user embedding, and \textbf{NormGrad} normalizes it to unit length.

Const gives the highest control NDCG on both datasets. Const retains each user's full displacement as its reference state approaches the item. Users paired with the same item and label present nearby states with distinct targets, creating the residuals that Prop.~\ref{prop:item_end} bounds. Grad and NormGrad assign these users one common target, while Pull makes both labels attractive. The target therefore determines which user differences the shared field must reconcile and how the two labels shape the embeddings.

\subsubsection{Ablation Study of Components}
\label{sec:ablation_components}

Fig.~\ref{fig:component_ablation} removes one component of PACF at a time. The network needs the state, the item, and the time together, because the time converts the remaining distance to the item back into the full displacement. \textbf{w/o Item} removes the item input, so the network must infer the item from a state that mixes user and item, and \textbf{w/o Time} removes the time, so one remaining displacement fits a nearby user early and a distant user late. Either way the residuals encode missing inputs instead of user differences. \textbf{w/o ResMLP}, an equally sized plain MLP, gives the largest drop among the field designs. The BCF target is linear in the embeddings up to the sign $\mu(y)$, so the residual skip carries the linear displacement, and the nonlinear branch absorbs the sign and the correction where co-user segments cross.

The label tells the network which way the target points. \textbf{w/o Sign} zeroes the label, so a positive and a negative pair near the same item present similar inputs with opposite targets. \textbf{w/o Repel} drops negative-pair regression, so sampled negatives act only through BPR.
\textbf{w/o Flow} drops the fit of velocities at single states, and \textbf{w/o Traj} drops the check of where the composed velocities lead, along which local errors accumulate or cancel. Tab.~\ref{tab:trajectory_controls} retunes both removals over five seeds, and each loss alone recovers most of the gain of PACF over BPR-MF while both stay below the full model.

\subsubsection{Inference Readout}
\label{sec:inference_readout}

\begin{wraptable}[13]{R}{0.55\columnwidth}
  \centering
  \caption{Static and endpoint readouts of PACF on ML and Food, measured by Recall@10 and NDCG@10 in percent as five-seed mean $\pm$ std. The static inner product ranks best in every cell, so PACF ranks with the static MF scorer that its two pathwise losses train.}
  \label{tab:inference_readout}
  \small
  \setlength{\tabcolsep}{1.3pt}
  \begin{tabular}{@{}lcccc@{}}
  \toprule
  & \multicolumn{2}{c}{ML} & \multicolumn{2}{c}{Food} \\
  Readout & Recall & NDCG & Recall & NDCG \\ \midrule
  $a$ & $7.48\pm0.41$ & $3.73\pm0.03$ & $9.75\pm0.08$ & $4.79\pm0.05$ \\
  $a_{\Phi}$ & $0.59\pm0.19$ & $0.23\pm0.08$ & $2.08\pm0.06$ & $0.95\pm0.01$ \\
  $a_{\mathrm{dist}}$ & $1.05\pm0.25$ & $0.50\pm0.15$ & $4.68\pm0.58$ & $2.42\pm0.26$ \\ \bottomrule
  \end{tabular}
\end{wraptable}

We compare three readouts of PACF. The static readout $a$ is the inner product of the embeddings and repeats the PACF result of Tab.~\ref{tab:main}. The endpoint readouts integrate the network from the user toward each candidate under a positive label on separately trained checkpoints, one per seed, and $a_{\Phi}$ and $a_{\mathrm{dist}}$ take the inner product and the negative distance between this endpoint and the candidate.

The static readout ranks best in every cell of Tab.~\ref{tab:inference_readout}. Trajectory consistency trains each trajectory to reach its conditioning item, so the better PACF fits it, the closer every endpoint lies to the candidate. $a_{\Phi}$ then approaches the squared norm of the candidate, $a_{\mathrm{dist}}$ approaches one value for all candidates, and only the residual user information ranks candidates. $a_{\Phi}$, which adds the item norm, thus ranks below $a_{\mathrm{dist}}$. Eq.~\eqref{eq:grad_traj_refined} returns this residual to $\mathbf{p}_u$, so the trajectory only trains the static readout.

\subsection{Robustness Analysis}
\label{sec:robustness}

\begin{figure}[!t]
\centering
\begin{subfigure}[b]{0.495\linewidth}
  \includegraphics[width=\linewidth]{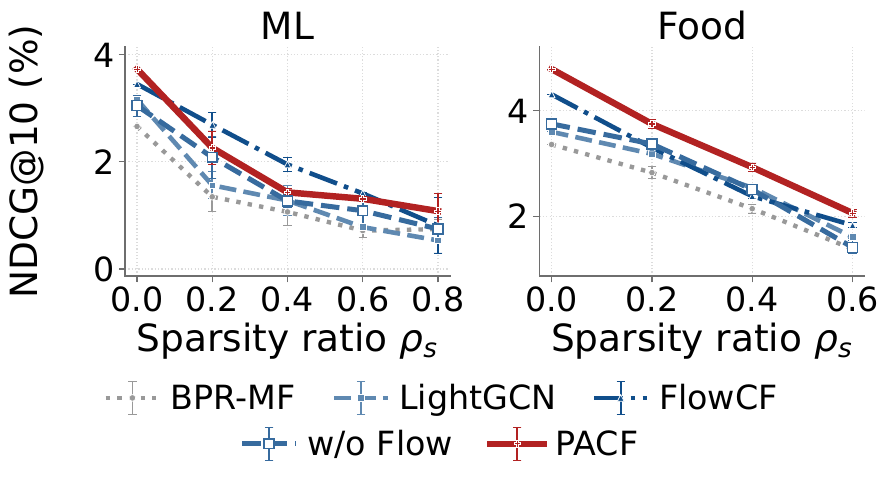}
  \caption{Sparsity Sweep.}
  \label{fig:robustness_sparsity}
\end{subfigure}\hfill
\begin{subfigure}[b]{0.495\linewidth}
  \includegraphics[width=\linewidth]{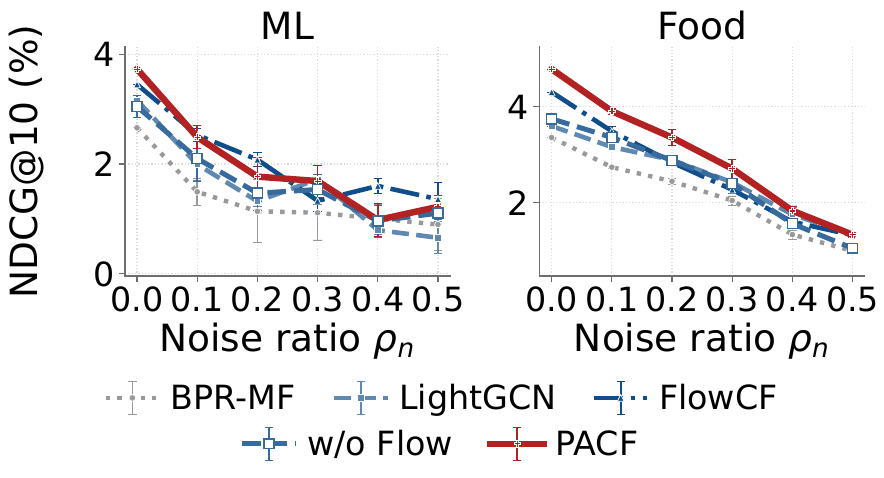}
  \caption{Replacement Noise Sweep.}
  \label{fig:robustness_noise}
\end{subfigure}
\caption{
Robustness of PACF and baselines to interaction removal in panel~a and item replacement in panel~b on ML and Food, measured by NDCG@10 in percent. Error bars show seed standard deviations. PACF stays above \textbf{w/o Flow} at every level and leads all compared models on Food.
  }
\label{fig:robustness_main}
\end{figure}

Fig.~\ref{fig:robustness_main} perturbs only the training interactions, by removing a fraction $\rho_s$ of them or replacing the item of a fraction $\rho_n$ with a random unobserved item, and all models keep their clean-data configurations.

Removal thins the feedback of most users, and PACF stays above \textbf{w/o Flow} at every removal level. PACF leads all compared models on Food and regains the lead on ML at the strongest removal ratio $\rho_s$. Every retained pair still enters the shared regression that couples the users of its item.

Replacement corrupts the segment, BCF target, and trajectory endpoint of each affected pair. PACF stays above \textbf{w/o Flow} at every noise level on both datasets and leads throughout Food. On ML, the densest dataset, PACF gains the least over BPR in the main comparison and loses its lead at most noise levels. With about 150 interactions per user against about six on Food, dense feedback already constrains each ML user through its own observed pairs, leaving less for other users to add.

\section{Conclusion}

For sparse implicit feedback, PACF augments BPR-MF with Bilateral Conditional Flow regularization, in which one shared velocity network regresses signed displacement targets along user-item segments and a trajectory consistency term ties the integrated endpoint of each observed pair to its item. Through gradient amortization, both losses shape the static embeddings. Our analysis shows how residuals drive these updates and gives a necessary condition at fixed state sensitivity that links the flow error of users sharing an item and label to their dispersion. PACF achieves the highest performance on all seven datasets while keeping the MF scorer and gaining most on sparse data.

\subsection*{AI use statement}
Generative AI tools assisted with literature search, manuscript writing, review, and LaTeX formatting.
The authors reviewed and verified all content and take responsibility for the manuscript.

\subsection*{Reproducibility statement}
Sec.~\ref{sec:setup} and Appendix~\ref{sec:implementation_environment} provide the experimental setup and implementation details.
Appendix~\ref{sec:training_algorithm} gives the training algorithm, and Appendix~\ref{app:proofs} presents the theoretical assumptions and proofs.

\bibliography{ref}
\bibliographystyle{pacf_preprint}

\clearpage
\appendix
\suppressfloats[t]

\section{Limitations and Future Work}\label{sec:limitations}

\textbf{Broader Impact.}
PACF serves recommendations with the MF scorer and discards the velocity network after training, and Tab.~\ref{tab:efficiency} reports its measured latency and memory. PACF learns from observed interactions, so a deployment should evaluate the exposure and subgroup-fairness effects that collaborative filtering inherits from these data and mitigate them where they arise.

\textbf{Design Scope.}
PACF uses straight reference paths in a Euclidean embedding space. By App.~\ref{sec:ot_path_proof}, this choice uniquely minimizes Euclidean kinetic energy. It gives the constant signed target $\mu(y_{ui})(\mathbf{q}_i-\mathbf{p}_u)$, and lets the complete flow input identify $\mathbf{p}_u=(\mathbf{h}_t-t\mathbf{q}_i)/(1-t)$ for $t<1$. The gradient and item-end analyses treat free user and item embeddings, matching the deployed MF scorer.

\textbf{Future Work.}
Non-Euclidean path families and state-dependent targets widen this design space, and each changes the target function in Eq.~\eqref{eq:target_function} on which the analysis rests. A graph encoder ties each embedding to its neighbors, so extending the analysis to graph-based recommenders requires joint gradients through the propagation layers, the flow map, and the shared field. Lower-order solvers can reduce training cost, and comparing their trajectory error and ranking accuracy with RK4 would quantify the role of the final stage at $t=1$ in Remark~\ref{rem:traj_endpoint}. Prop.~\ref{prop:item_end} gives a necessary condition that shared fitting at the item end places on the embeddings, and characterizing the full embedding regularizer induced by this coupling is a natural next step toward a sufficient condition.

\section{Method Details}

\subsection{Training Algorithm}\label{sec:training_algorithm}

Alg.~\ref{alg:training} draws a mini-batch $\mathcal{B}=\{(u,i^+,\{i^-_k\}_{k=1}^{K_{-}})\}$ and computes the three losses in Eq.~\eqref{eq:obj}. Adam jointly updates the velocity network and static embeddings. Inference uses only the embeddings.

\begin{algorithm}[t]
   \caption{Algorithm of PACF.}
   \label{alg:training}
\begin{algorithmic}[1]
   \State {\bfseries Input:} Dataset $\mathcal{O}$, batch $B$, weights $\lambda,\beta$, negative scale $\mu_-$, RK4 steps $T$, negatives $K_{-}$.
   \State {\bfseries Initialize:} Embeddings $\mathbf{P}, \mathbf{Q}$ and velocity network $\mathbf{v}_{\boldsymbol\theta}$.
   \State {\bfseries Output:} Trained embeddings $\mathbf{P}, \mathbf{Q}$.
   \While{not converged}
       \State Sample $B$ positive pairs $(u,i^+)$ from $\mathcal{O}$.
       \State Draw $K_{-}$ uniform negatives $i^-_k\in\mathcal{I}\setminus\mathcal{I}_u$ for each pair.

       \State \textcolor{gray}{// 1. Pairwise ranking loss}
       \State Compute $\mathcal{L}_{\mathrm{rec}}$ via Eq.~\eqref{eq:ranking_loss}.

       \State \textcolor{gray}{// 2. Bilateral flow matching}
       \State Sample times $t \sim \Unif[0,1]$ of size $B$.
       \For{$(u, i, y) \in \{(u, i^+, +1)\} \cup \{(u, i^-_k, -1)\}_{k=1}^{K_{-}}$}
           \State Form $\mathbf{h}_t$ via Eq.~\eqref{eq:path} and $\mathbf{u}_t$ via Eq.~\eqref{eq:target_vector}.
           \State Accumulate $\mathcal{L}_{\mathrm{flow}}$ via Eq.~\eqref{eq:flow_loss}.
       \EndFor

       \State \textcolor{gray}{// 3. Trajectory consistency}
       \State Integrate with RK4: $\Phi_1(\mathbf{p}_u) \leftarrow \RKfour(\mathbf{p}_u, \mathbf{q}_{i^+}, \mathbf{v}_{\boldsymbol\theta}, +1, T)$.
       \State Compute $\mathcal{L}_{\mathrm{traj}}$ via Eq.~\eqref{eq:traj_constraint} for the positive item $i^+$.

       \State \textcolor{gray}{// Joint update}
       \State Update $\{\mathbf{p}_u\}, \{\mathbf{q}_i\}, \boldsymbol\theta$ by minimizing Eq.~\eqref{eq:obj}.
   \EndWhile
\end{algorithmic}
\end{algorithm}

For each positive pair, $K_{-}$ sampled negatives contribute separate terms to $\mathcal{L}_{\mathrm{flow}}$. The trajectory loss integrates only the positive condition because its endpoint is an observed training item.

\subsection{Trajectory Integration}\label{sec:rk4_details}
PACF uses RK4 to compute the endpoint in Eq.~\eqref{eq:traj_constraint}. For a fixed positive condition, write the field as:
\begin{equation}
\mathbf{f}_t(\mathbf{h})=\mathbf{v}_{\boldsymbol\theta}(\mathbf{h},t,\mathbf{q}_{i^+},+1).
\end{equation}
One RK4 step over $t\in[0,1]$ combines four evaluations of the shared velocity network $\mathbf{f}_t$:
\begin{equation}
\label{eq:rk4_one_step}
\begin{aligned}
\mathbf{k}_1 &= \mathbf{f}_0(\mathbf{p}_u),\\
\mathbf{k}_2 &= \mathbf{f}_{1/2}(\mathbf{p}_u + \tfrac{1}{2}\mathbf{k}_1),\\
\mathbf{k}_3 &= \mathbf{f}_{1/2}(\mathbf{p}_u + \tfrac{1}{2}\mathbf{k}_2),\\
\mathbf{k}_4 &= \mathbf{f}_1(\mathbf{p}_u + \mathbf{k}_3),\\
\Phi_1(\mathbf{p}_u) &= \mathbf{p}_u + \tfrac{1}{6}(\mathbf{k}_1+2\mathbf{k}_2+2\mathbf{k}_3+\mathbf{k}_4).
\end{aligned}
\end{equation}
The stages $\mathbf{k}_2$ to $\mathbf{k}_4$ evaluate the field at states generated by earlier predictions, which can leave the reference segment. Flow matching penalizes each reference-state residual separately, while trajectory consistency penalizes only the endpoint, where stage errors add up or cancel. The solver sets how far the two terms differ. A forward Euler step gives $\Phi_1(\mathbf{p}_u)=\mathbf{p}_u+\mathbf{f}_0(\mathbf{p}_u)$, so the positive endpoint loss equals flow matching at $t=0$. RK4 composes four predictions per step, and its last stage evaluates the shared field at the item end $t=1$, where Remark~\ref{rem:traj_endpoint} shows it couples the users of each item.

\section{Additional Experiments}
\label{sec:more_exp}

This appendix first fixes the protocols behind every supplementary result. The analyses then follow the pathwise signal of PACF from the learned field to the embedding updates, the role of each component, and the static scorer, and close with robustness, cost, and the selected settings.

\subsection{Experimental Protocols}\label{sec:implementation_environment}

\textbf{Implementation.}
All methods are implemented in PyTorch $2.1$~\citep{paszke2019pytorch} with CUDA $12.1$ on a workstation with an Intel Xeon Gold $6244$ CPU, $395$~GB RAM, and three NVIDIA Quadro RTX $6000$ GPUs with $24$~GB of memory each. Each experiment runs on one GPU, and Bayesian optimization runs its trials in parallel across the three GPUs. PACF uses MF ID embeddings as its backbone. Xavier-uniform initialization sets the user table, the item table, and every layer of $\mathbf{v}_{\boldsymbol\theta}$, every bias starts at zero, and both tables are optimized directly as free model parameters.

\textbf{Main comparison.}
Each method receives 500 Bayesian optimization trials with the same early stopping. PACF searches the space of Sec.~\ref{sec:setup}, and each baseline the ranges of its original paper. PACF and the pairwise discriminative baselines pair each training positive with $K_{-}=4$ uniform negatives from $\mathcal{I}\setminus\mathcal{I}_u$, and the other baselines keep their original objectives and sampling rules.

\textbf{Endpoint controls.}\label{sec:ablation_protocols}
The endpoint controls in Tab.~\ref{tab:endpoint_controls} keep the MF embeddings and the BPR loss and replace the shared regression of PACF with penalties on distances between users and items. For a training triplet $(u,i^+,i^-)$, they use the mean squared coordinate distances $D_\pm=\|\mathbf{p}_u-\mathbf{q}_{i^\pm}\|_2^2/d$ and add endpoint penalties to the BPR ranking loss while keeping the inner-product scorer:
\begin{equation}
\label{eq:endpoint_control_objectives}
\begin{aligned}
\mathcal{L}_{\mathrm{EP}^{+}}
&=\mathcal{L}_{\mathrm{rec}}+\lambda_{\mathrm{E}}\,\mathbb{E}[D_+],\\
\mathcal{L}_{\mathrm{EP}^{\pm}}
&=\mathcal{L}_{\mathrm{rec}}+\lambda_{\mathrm{E}}\,\mathbb{E}\!\left[D_++\gamma_{\mathrm{E}}\max\{0,m_{\mathrm{E}}+D_+-D_-\}\right].
\end{aligned}
\end{equation}
Here $\lambda_{\mathrm{E}}$ weights endpoint regularization, $\gamma_{\mathrm{E}}$ weights the negative margin term, and $m_{\mathrm{E}}$ sets its distance margin, each searched over $[0,1]$. EP$^{+}$ tunes only the weight $\lambda_{\mathrm{E}}$. EP$^{\pm}$ also penalizes every negative item whose distance to the user exceeds that of the positive item by less than $m_{\mathrm{E}}$.

\textbf{Flow-based controls.}
The sampling and component controls regress onto $\mathbf{q}_{i^+}-\mathbf{p}_u$ for positives and $-(\mathbf{q}_{i^-}-\mathbf{p}_u)$ for negatives. They average the positive and negative errors separately over batch entries and embedding coordinates and add the negative error with a weight $\omega_-\in[0,1]$, a loss weight of these controls distinct from the target scale $\mu_-$ of PACF. The sampling controls keep reference-state regression and the RK4 trajectory term and draw one time per training triplet, shared by its positive and negative flow terms. The point controls fix this time at $0$ or $0.5$. Grid draws it uniformly from $\{0.25,0.5,0.75\}$, which matches the number of reference-state evaluations of uniform sampling. These draws set only the flow-matching states, and the trajectory term keeps RK4 integration.

\label{sec:label_control_protocol}\textbf{w/o Sign} sets the label coordinate of the network input in Eq.~\eqref{eq:resmlp} to zero in flow matching and trajectory integration and keeps the signed targets of Eq.~\eqref{eq:target_vector}, so the field receives the label only through its targets. \textbf{w/o Repel} trains with $\mathcal{L}_{\mathrm{rec}}$, positive flow matching, and positive trajectory consistency, so sampled negatives act only through BPR. The first control removes the label input and keeps signed negative regression, and the second removes signed negative regression.

\textbf{Search budgets.}
On ML and Food, each endpoint control and each fixed-support sampling control receives 150 trials, the BPR reference of Tab.~\ref{tab:endpoint_controls} and Figs.~\ref{fig:supervision_density} and~\ref{fig:component_ablation} receives a separate 200-trial search, and each component removal receives 500 trials. The ablation search space tunes $d\in\{32,64,128,256,512\}$, a log-uniform learning rate over $[10^{-5},10^{-1}]$, and log-uniform weight decay over $[10^{-8},10^{-3}]$, together with the coefficients of the endpoint controls. The flow-based controls also search the outer flow and trajectory weights over $[0.1,1]$ and $[0,0.5]$, and the sampling controls also tune $d_h\in\{64,128,256,512\}$ and $T\in\{1,4,8,16\}$. Each control search uses seed 2024, a training batch size of 2,048, an evaluation batch size of 4,096, and up to 500 epochs per trial, and it reports the test score of the configuration that validation NDCG@10 selects, independently of the full-model settings in Tab.~\ref{tab:best_hpo}. The PACF entries in these ablations repeat Tab.~\ref{tab:main}, and the BPR reference differs from the BPR-MF row of Tab.~\ref{tab:main}, which averages five seeds of the main search.

\textbf{Diagnostic sweeps.}
Tabs.~\ref{tab:tstep} and~\ref{tab:transport_width} vary $T$ and $d_h$ in fixed-budget diagnostic searches outside the main search. Their entries compare settings within each table, and Tab.~\ref{tab:main} reports the main result.

\textbf{Perturbations.}\label{sec:robustness_protocols}
Both perturbations change only the training interactions $\mathcal{O}_{\mathrm{train}}\subseteq\mathcal{O}$, so every setting evaluates the learned scorer on the same validation and test interactions and candidate sets, and every model keeps its clean-data configuration. Removal uses $\rho_s\in\{0,0.2,0.4,0.6,0.8\}$ on ML and $\rho_s\in\{0,0.2,0.4,0.6\}$ on Food. It first keeps one stratified interaction per user $u\in\mathcal{U}$, so every user retains a training history, then draws $N_{\mathrm{rem}}=\lfloor|\mathcal{O}_{\mathrm{train}}|(1-\rho_s)\rfloor-|\mathcal{U}|$ interactions uniformly without replacement. Replacement uses $\rho_n\in\{0,0.1,\ldots,0.5\}$ on both datasets. It samples $\mathcal{R}\subset\mathcal{O}_{\mathrm{train}}$ with $|\mathcal{R}|=\lfloor|\mathcal{O}_{\mathrm{train}}|\,\rho_n\rfloor$ and replaces the item of each $(u,i)\in\mathcal{R}$ with an item drawn uniformly from $\mathcal{I}\setminus\mathcal{I}_u$. The interaction count and density stay fixed while the observed associations change, and sampling among valid user-item edges keeps every perturbed set free of duplicate pairs.

\subsection{Full Performance Statistics}
\label{sec:full_performance_statistics}

\begin{table}[!t]
\centering
\caption{Comparison of all methods on ML, Food, Dance, and Yelp, measured by Recall@10 and NDCG@10 in percent as five-seed mean $\pm$ std. Rules separate discriminative baselines, generative baselines, and PACF, and \textbf{bold} and \underline{underline} mark the best and second-best mean per column. PACF has the highest mean in every column across the movie, short-video, and business domains.}
\label{tab:full_main}
\begingroup
\fontsize{9}{11}\selectfont
\setlength{\tabcolsep}{2pt}
\renewcommand{\arraystretch}{1.1}
\begin{tabular*}{\linewidth}{@{\extracolsep{\fill}}lcccccccc@{}}
\toprule
Method & \multicolumn{2}{c}{\textbf{ML}} & \multicolumn{2}{c}{\textbf{Food}} & \multicolumn{2}{c}{\textbf{Dance}} & \multicolumn{2}{c}{\textbf{Yelp}} \\
 & Recall & NDCG & Recall & NDCG & Recall & NDCG & Recall & NDCG \\ \midrule
BPR-MF & 5.59$\pm$1.34 & 2.66$\pm$0.65 & 6.91$\pm$0.22 & 3.36$\pm$0.08 & 6.62$\pm$0.24 & 3.29$\pm$0.08 & 2.77$\pm$0.26 & 1.37$\pm$0.13 \\
NeuMF & 6.18$\pm$0.52 & 3.11$\pm$0.15 & 7.38$\pm$0.32 & 3.55$\pm$0.13 & 6.66$\pm$0.24 & 3.31$\pm$0.07 & 3.11$\pm$0.13 & 1.54$\pm$0.04 \\
NCF & 6.39$\pm$0.44 & 3.24$\pm$0.11 & 7.57$\pm$0.30 & 3.63$\pm$0.11 & 6.72$\pm$0.19 & 3.37$\pm$0.11 & 3.19$\pm$0.11 & 1.60$\pm$0.05 \\
LightGCN & 6.52$\pm$0.24 & 3.16$\pm$0.11 & 7.44$\pm$0.34 & 3.60$\pm$0.15 & 6.19$\pm$0.30 & 3.09$\pm$0.13 & 3.26$\pm$0.06 & 1.62$\pm$0.04 \\
SGL & 6.72$\pm$0.18 & 3.35$\pm$0.06 & 9.08$\pm$0.03 & 4.48$\pm$0.02 & 9.23$\pm$0.14 & 4.83$\pm$0.04 & 4.23$\pm$0.07 & 2.10$\pm$0.02 \\
SimGCL & 6.13$\pm$1.05 & 2.98$\pm$0.56 & 9.46$\pm$0.35 & \underline{4.71$\pm$0.16} & 10.17$\pm$0.15 & 5.17$\pm$0.06 & 4.40$\pm$0.09 & 2.19$\pm$0.02 \\
LightCCF & \underline{7.38$\pm$0.62} & \underline{3.56$\pm$0.28} & \underline{9.56$\pm$0.13} & 4.68$\pm$0.06 & \underline{10.21$\pm$0.07} & \underline{5.18$\pm$0.04} & \underline{4.46$\pm$0.05} & \underline{2.21$\pm$0.01} \\ \midrule
MultiVAE & 6.27$\pm$0.42 & 3.09$\pm$0.13 & 8.36$\pm$0.02 & 4.09$\pm$0.00 & 8.37$\pm$0.24 & 4.24$\pm$0.12 & 3.54$\pm$0.08 & 1.77$\pm$0.04 \\
RecVAE & 6.82$\pm$0.25 & 3.14$\pm$0.15 & 8.02$\pm$0.13 & 3.93$\pm$0.08 & 8.74$\pm$0.36 & 4.56$\pm$0.15 & 3.62$\pm$0.22 & 1.81$\pm$0.08 \\
DiffRec & 6.82$\pm$0.56 & 3.54$\pm$0.15 & 7.34$\pm$0.33 & 3.53$\pm$0.13 & 8.65$\pm$2.10 & 4.26$\pm$1.02 & 3.08$\pm$0.08 & 1.53$\pm$0.03 \\
FlowCF & 7.31$\pm$0.22 & 3.45$\pm$0.08 & 8.36$\pm$0.11 & 4.31$\pm$0.10 & 9.05$\pm$0.15 & 4.79$\pm$0.06 & 3.56$\pm$1.57 & 1.72$\pm$0.77 \\ \midrule
PACF & \textbf{7.48$\pm$0.41} & \textbf{3.73$\pm$0.03} & \textbf{9.75$\pm$0.08} & \textbf{4.79$\pm$0.05} & \textbf{10.32$\pm$0.12} & \textbf{5.25$\pm$0.05} & \textbf{4.49$\pm$0.08} & \textbf{2.23$\pm$0.05} \\ \bottomrule
\end{tabular*}
\endgroup
\end{table}

\begin{table}[!t]
\centering
\caption{Comparison of all methods on Amazon Baby, Sports, and Clothing, measured by Recall@10 and NDCG@10 in percent as five-seed mean $\pm$ std. Rules separate discriminative baselines, generative baselines, and PACF, and \textbf{bold} and \underline{underline} mark the best and second-best mean per column. PACF has the highest mean in every column on the three sparsest datasets, all from e-commerce.}
\label{tab:full_main_amazon}
\begingroup
\fontsize{9}{11}\selectfont
\setlength{\tabcolsep}{2pt}
\renewcommand{\arraystretch}{1.1}
\begin{tabular}{@{}lcccccc@{}}
\toprule
Method & \multicolumn{2}{c}{\textbf{Baby}} & \multicolumn{2}{c}{\textbf{Sports}} & \multicolumn{2}{c}{\textbf{Clothing}} \\
 & Recall & NDCG & Recall & NDCG & Recall & NDCG \\ \midrule
BPR-MF & 1.42$\pm$0.04 & 0.67$\pm$0.01 & 1.48$\pm$0.05 & 0.74$\pm$0.03 & 0.64$\pm$0.15 & 0.30$\pm$0.08 \\
NeuMF & 1.43$\pm$0.04 & 0.68$\pm$0.01 & 1.49$\pm$0.05 & 0.74$\pm$0.03 & 0.65$\pm$0.17 & 0.31$\pm$0.09 \\
NCF & 1.68$\pm$0.03 & 0.82$\pm$0.01 & 1.48$\pm$0.06 & 0.75$\pm$0.03 & 0.68$\pm$0.18 & 0.32$\pm$0.09 \\
LightGCN & 1.37$\pm$0.04 & 0.66$\pm$0.01 & 1.34$\pm$0.13 & 0.67$\pm$0.06 & 0.58$\pm$0.09 & 0.27$\pm$0.04 \\
SGL & 2.16$\pm$0.09 & 1.11$\pm$0.04 & 2.29$\pm$0.24 & 1.19$\pm$0.11 & 1.30$\pm$0.26 & 0.69$\pm$0.15 \\
SimGCL & \underline{2.58$\pm$0.10} & 1.32$\pm$0.04 & 3.45$\pm$0.08 & 1.82$\pm$0.04 & 2.56$\pm$0.18 & 1.35$\pm$0.11 \\
LightCCF & 2.50$\pm$0.03 & \underline{1.40$\pm$0.03} & \underline{3.48$\pm$0.09} & \underline{1.91$\pm$0.04} & \underline{2.67$\pm$0.01} & \underline{1.47$\pm$0.02} \\ \midrule
MultiVAE & 1.67$\pm$0.04 & 0.80$\pm$0.01 & 1.36$\pm$0.12 & 0.69$\pm$0.06 & 1.49$\pm$0.05 & 0.77$\pm$0.04 \\
RecVAE & 1.61$\pm$0.16 & 0.80$\pm$0.08 & 1.60$\pm$0.56 & 0.81$\pm$0.29 & 1.47$\pm$0.11 & 0.79$\pm$0.06 \\
DiffRec & 1.57$\pm$0.02 & 0.72$\pm$0.00 & 1.58$\pm$0.05 & 0.85$\pm$0.00 & 0.65$\pm$0.03 & 0.30$\pm$0.01 \\
FlowCF & 1.79$\pm$0.02 & 0.91$\pm$0.02 & 1.98$\pm$0.01 & 1.08$\pm$0.00 & 1.22$\pm$0.02 & 0.66$\pm$0.00 \\ \midrule
PACF & \textbf{2.63$\pm$0.04} & \textbf{1.42$\pm$0.03} & \textbf{3.69$\pm$0.07} & \textbf{1.97$\pm$0.03} & \textbf{2.88$\pm$0.04} & \textbf{1.59$\pm$0.02} \\ \bottomrule
\end{tabular}
\endgroup
\end{table}

Tabs.~\ref{tab:full_main} and~\ref{tab:full_main_amazon} report the five-seed runs of Tab.~\ref{tab:main} with their standard deviations. PACF has the highest mean in every column, and its seed standard deviation stays below its margin over BPR-MF in every column. PACF and BPR-MF share the inner-product scorer of Eq.~\eqref{eq:pred}, so this margin comes entirely from the trained embeddings. The following sections trace how the pathwise losses produce it.

\subsection{How the Shared Field Updates the Embeddings}\label{sec:embedding_diagnostic}\label{sec:visualizing}

Gradient amortization needs two effects. The field must learn the BCF targets, and its residuals must move the embeddings that the scorer keeps. Fig.~\ref{fig:r5_transport} examines the first effect and Tab.~\ref{tab:gradient_drift} the second.

Fig.~\ref{fig:r5_transport} trains the field on two-dimensional synthetic data. With flow matching, positive predictions follow the displacement $\mathbf{q}_i-\mathbf{p}_u$ and negative predictions reverse it, as Eq.~\eqref{eq:target_vector} requires. Without $\mathcal{L}_{\mathrm{flow}}$, the trajectory term supervises only positive-label inputs, and the field points toward the item under both labels. Flow matching is the only loss shaping the field at negative-label inputs.

\begin{table}[!t]
\centering
\caption{Gradient and embedding-drift ratios of $\mathcal{L}_{\mathrm{flow}}$ to $\mathcal{L}_{\mathrm{rec}}$ at PACF checkpoints on ML and Food. Flow matching moves PACF's embeddings by amounts comparable to the ranking loss, and its input route alone carries a nonzero gradient, so it directly reshapes the scorer used at inference.}
\label{tab:gradient_drift}
\begin{tabular}{lccc}
\toprule
Dataset &
$\frac{\|\nabla_{\mathbf{P},\mathbf{Q}}\mathcal{L}_{\mathrm{flow}}\|_2}
{\|\nabla_{\mathbf{P},\mathbf{Q}}\mathcal{L}_{\mathrm{rec}}\|_2}$ &
$\frac{\Delta_{\mathbf{P},\mathbf{Q}}(\mathcal{L}_{\mathrm{flow}})}
{\Delta_{\mathbf{P},\mathbf{Q}}(\mathcal{L}_{\mathrm{rec}})}$ &
$\frac{\|\nabla_{\mathbf{P},\mathbf{Q}}\mathcal{L}_{\mathrm{flow}}^{\mathrm{detach}}\|_2}
{\|\nabla_{\mathbf{P},\mathbf{Q}}\mathcal{L}_{\mathrm{rec}}\|_2}$ \\
\midrule
ML   & 0.1375 & 0.9915 & 0.1095 \\
Food & 0.2399 & 1.4893 & 0.0716 \\
\bottomrule
\end{tabular}
\end{table}

Tab.~\ref{tab:gradient_drift} measures how strongly this fit reaches the embedding tables at PACF checkpoints. It compares gradient norms on $(\mathbf{P},\mathbf{Q})$ with the one-step displacement $\Delta_{\mathbf{P},\mathbf{Q}}(\mathcal{L})$, the $\ell_2$ norm of the change in $(\mathbf{P},\mathbf{Q})$ after one Adam step on a loss $\mathcal{L}$. The control $\mathcal{L}_{\mathrm{flow}}^{\mathrm{detach}}$ stops the gradient through the target $\mathbf{u}_t$ and keeps the error, so its gradient flows only through the state $\mathbf{h}_t$ and the item condition. The first column compares raw gradient magnitudes, the second measures the movement that Adam applies to the embeddings, and the third keeps only the part of the flow gradient that bypasses the target.

\begin{wrapfigure}[18]{r}{0.5\columnwidth}
  \centering
  \includegraphics[width=.75\linewidth]{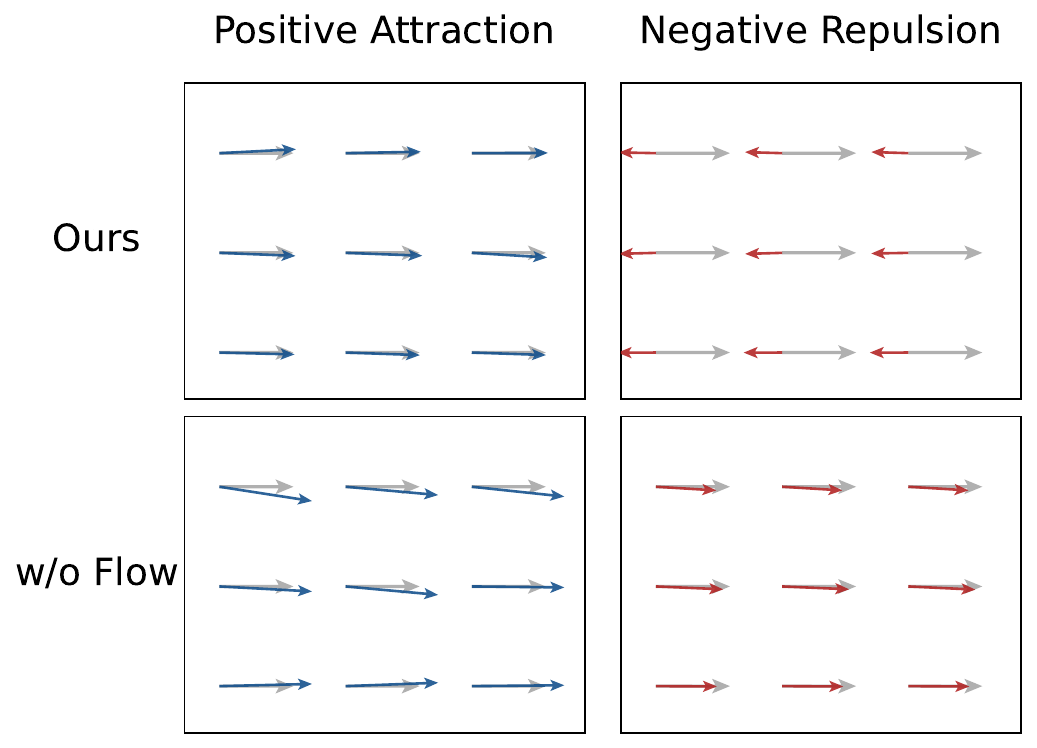}
  \caption{Field predictions on synthetic data, with one column per label and one row each for PACF with and without $\mathcal{L}_{\mathrm{flow}}$. Gray arrows show $\mathbf{q}_i-\mathbf{p}_u$, and colored arrows show $\mathbf{v}_{\boldsymbol\theta}$. Flow matching makes PACF's field point toward the item for positive labels and away from it for negative labels.}
  \label{fig:r5_transport}
\end{wrapfigure}

The flow gradient has a smaller norm than the ranking gradient, yet one Adam step on $\mathcal{L}_{\mathrm{flow}}$ moves the embeddings about as far as a step on $\mathcal{L}_{\mathrm{rec}}$ on ML and farther on Food. Adam divides each coordinate by its second-moment estimate, so the displacement follows how consistently a loss pushes each coordinate. The detached gradient is nonzero on both datasets, so by Eq.~\eqref{eq:flow_grad_residual} flow residuals reach $\mathbf{p}_u$ through the input route $(1-t)\mathbf{J}_{\mathbf h}$ even with the target held fixed. Restoring the target route raises the gradient norm on both datasets. This route carries the item-end pull of each user toward the mean of its co-users. It acts on the same residual as the input route, so their relative directions set the size of the gradient. Flow matching therefore changes the embeddings at a rate comparable to the ranking loss, and the inner-product scorer keeps this change after the shared velocity field is discarded.

\subsection{Contributions of the Pathwise Components}\label{sec:ablation_variants}\label{sec:ablation_recall_curves}

The main text measures each choice by NDCG@10. This section reports these comparisons by Recall@10, adds loss controls and integration depth, and explains each effect through the regression.

\begin{figure}[!t]
  \centering
  \includegraphics[width=0.58\linewidth]{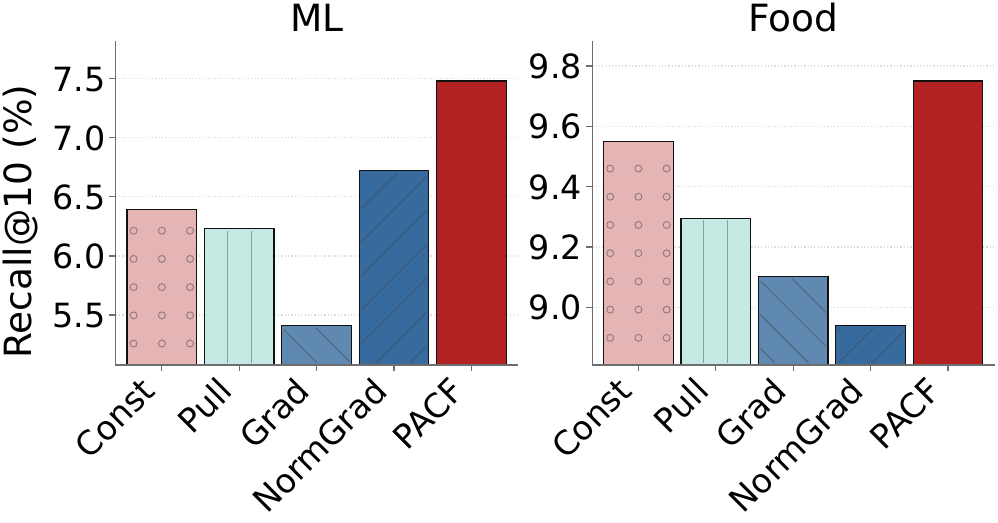}
  \caption{Recall@10 in percent for the target comparison of Fig.~\ref{fig:target_structure_ndcg}. Among the target controls, NormGrad leads on ML and Const on Food. PACF reports the separately tuned main result.}
  \label{fig:target_structure_recall}
\end{figure}

\textbf{Target form.}
Fig.~\ref{fig:target_structure_recall} adds Recall to the NDCG comparison of Sec.~\ref{sec:ablation_target}. Grad is $y_{ui}\mathbf{q}_i$, the gradient of the signed inner-product score with respect to the user embedding, and NormGrad rescales it to unit norm. The four targets share one tuning protocol, and the PACF bar repeats the main result.

NormGrad gives the highest control Recall on ML, while Const leads on Food. The different ML rankings under Recall and NDCG distinguish retrieval within the top ten from the higher positions that NDCG rewards. On Food, Const leads both metrics. Together, these comparisons evaluate how the target's user dependence, magnitude, and sign affect the static ranking learned through the shared field. Const retains the full user-to-item displacement and reverses it for sampled negatives, preserving both user differences and the distinction between the two labels throughout the path.

\begin{figure}[!t]
\centering
\begin{minipage}[t]{0.49\linewidth}
  \centering
  \includegraphics[width=0.87\linewidth]{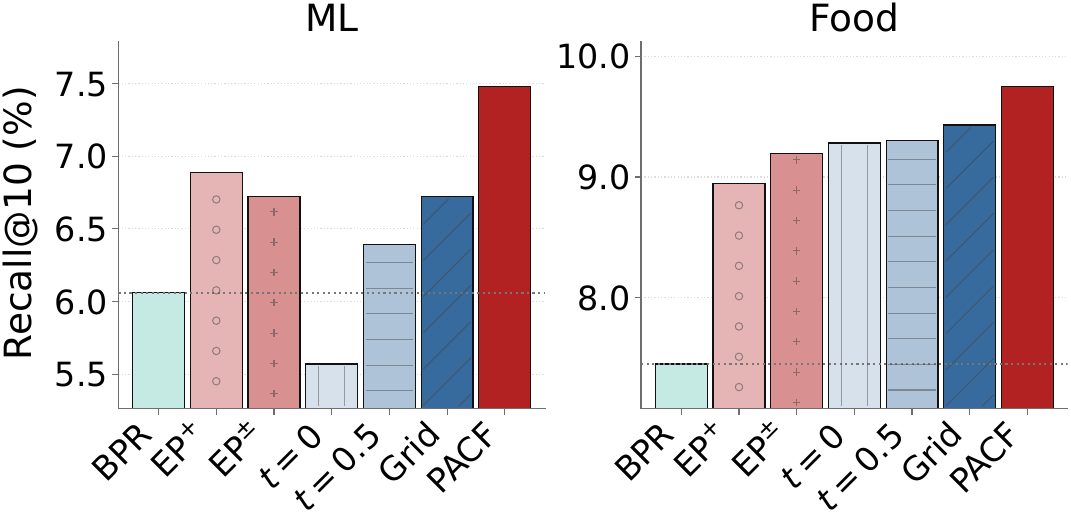}
  \caption{Sampling supports of $t$ in $\mathcal{L}_{\mathrm{flow}}$ on ML and Food, shown with BPR and the two endpoint penalties and measured by Recall@10 in percent. The dotted line marks the BPR level. PACF samples $t$ uniformly over the whole segment.}
  \label{fig:supervision_density_recall}
\end{minipage}\hfill
\begin{minipage}[t]{0.49\linewidth}
  \centering
  \includegraphics[width=\linewidth]{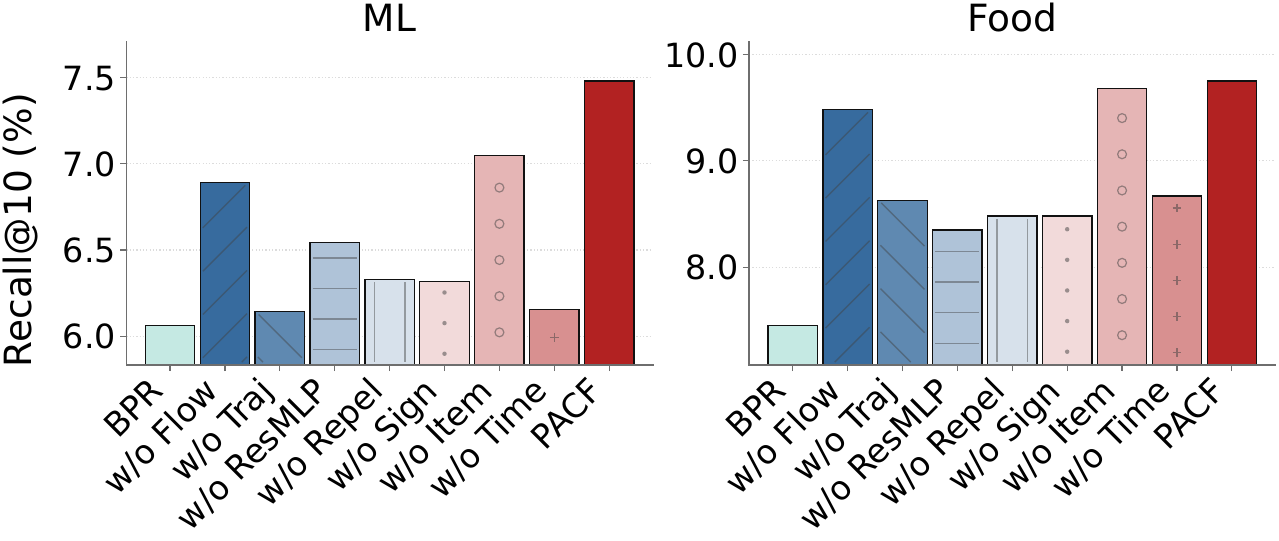}
  \caption{Removal of one PACF component at a time on ML and Food, measured by Recall@10 in percent against BPR. PACF feeds the complete input $(\mathbf{h}_t,\mathbf{q}_i,t,y_{ui})$ to one shared residual field and trains it with both of its pathwise loss terms.}
  \label{fig:component_ablation_recall}
\end{minipage}
\end{figure}

\textbf{Sampling support.}
Fig.~\ref{fig:supervision_density_recall} repeats the sampling supports of Fig.~\ref{fig:supervision_density} under Recall@10, and Recall rises from $t{=}0$ through $t{=}0.5$ to Grid on both datasets. At $t=0$ every state equals its user embedding, so each user presents its own input and the field fits each displacement separately. Toward the item, the states of co-users approach one another by the factor $1-t$ while their targets keep their difference, and the coefficient $(|\mu(y)|-L_{\mathbf h}(1-t))_+$ in Eq.~\eqref{eq:item_end_pointwise} grows. Uniform sampling covers the item-end interval where this coefficient is positive and also spreads the states over the whole segment, so its comparison with Grid changes the coverage of the segment and the proximity to the item together.

\textbf{Field inputs.}
Fig.~\ref{fig:component_ablation_recall} repeats the component removals of Fig.~\ref{fig:component_ablation} under Recall@10. The complete input determines the target through Eq.~\eqref{eq:target_function}, which needs the state, the item, the time, and the label. Removing the item condition forces the field to infer $\mathbf{q}_i$ from a state that mixes user and item. Removing time leaves the remaining displacement $\mathbf{q}_i-\mathbf{h}_t$ without its scale $1/(1-t)$, so one input fits a nearby user early on the segment and a distant user late on it. In both cases the residual records a missing input instead of a difference between users. Removing time lowers Recall more than removing the item condition on both datasets. The state still carries part of the item through $t\mathbf{q}_i$, while only $t$ supplies the scale. The BCF target is linear in the embeddings up to the sign $\mu(y)$, so the residual skip of Eq.~\eqref{eq:resmlp} carries the linear displacement and the nonlinear branch absorbs the sign and the conditional-mean correction. An equally sized plain MLP represents both through its nonlinear layers, which changes its fit and the state sensitivity $\mathbf{J}_{\mathbf h}$ that carries residuals into the embeddings.

\textbf{Labels and negative targets.}
Zeroing the label in \textbf{w/o Sign} gives a positive and a negative pair near the same item similar inputs with opposite targets, so the field averages them and the residuals mix the two labels. \textbf{w/o Repel} removes negative regression and leaves negatives to BPR. Tab.~\ref{tab:trajectory_controls} adds a finer control. Setting $\mu_-=0$ keeps the negative flow terms and turns each into $\|\mathbf{v}_{\boldsymbol\theta}\|_2^2$, which removes the displacement target and the gradient through it, and every mean falls below PACF. At $\mu_-=0$ the negative-label bound of Prop.~\ref{prop:item_end} vanishes, so the signed negative targets of PACF add an item-end condition on the users who sample each item as a negative, with the weight $\mu_-^2\leq1$.

\begin{table}[!t]
\centering
\caption{Trajectory-term controls on ML and Food, measured by Recall@10 and NDCG@10 in percent as five-seed mean $\pm$ std. PACF exceeds every control in all four columns, including the Beta sampling control, so both pathwise terms and the signed negative target contribute to its ranking.}
\label{tab:trajectory_controls}
\resizebox{\linewidth}{!}{
\begin{tabular}{lcccc}
\toprule
& \multicolumn{2}{c}{ML} & \multicolumn{2}{c}{Food} \\
Setting & Recall & NDCG & Recall & NDCG \\ \midrule
PACF & $7.48\pm0.41$ & $3.73\pm0.03$ & $9.75\pm0.08$ & $4.79\pm0.05$ \\
w/o $\mathcal{L}_{\mathrm{traj}}$ & $7.05\pm0.44$ & $3.33\pm0.13$ & $9.32\pm0.16$ & $4.57\pm0.06$ \\
$\beta=0$, $t\sim\Beta(1,b)$ & $7.25\pm0.47$ & $3.64\pm0.27$ & $9.21\pm0.19$ & $4.54\pm0.09$ \\
w/o $\mathcal{L}_{\mathrm{flow}}$ & $6.69\pm0.76$ & $3.22\pm0.42$ & $9.27\pm0.08$ & $4.52\pm0.06$ \\
$\mu_-=0$ & $7.08\pm0.66$ & $3.33\pm0.35$ & $9.17\pm0.17$ & $4.50\pm0.12$ \\ \bottomrule
\end{tabular}}
\end{table}

\textbf{Flow and trajectory losses.}
Tab.~\ref{tab:trajectory_controls} retunes each control over five seeds, matching the protocol of the main result. The control w/o $\mathcal{L}_{\mathrm{traj}}$ sets $\beta=0$, w/o $\mathcal{L}_{\mathrm{flow}}$ sets $\lambda=0$, and the Beta control sets $\beta=0$ and samples $t\sim\Beta(1,b)$ with $b>1$, which moves the flow-matching states toward the user. The two losses constrain different quantities. Flow matching fits the field at reference states, and trajectory consistency checks the endpoint of the RK4 stages, where local errors add up or cancel. Removing either loss lowers both metrics on both datasets, and each drop exceeds the seed standard deviation on Food and for NDCG on ML. The Beta control recovers part of the NDCG decline on ML, stays at the level of w/o $\mathcal{L}_{\mathrm{traj}}$ on Food, and remains below PACF on both datasets. The trajectory term therefore contributes beyond the placement of the flow states. Setting $\beta=0$ removes the final RK4 stage at $t=1$, which Remark~\ref{rem:traj_endpoint} shows to couple all users of an item through one field output.

\begin{wraptable}[12]{r}{0.5\columnwidth}
\centering
\caption{Number of RK4 steps $T$ on ML and Food, measured by Recall@10 and NDCG@10 in percent. One RK4 step comes close to the deeper settings, so PACF can keep integration short.}
\label{tab:tstep}
\resizebox{\linewidth}{!}{
\begin{tabular}{llcccc}
\toprule
Dataset               & Metric & $T=1$ & $T=4$ & $T=8$ & $T=16$ \\ \midrule
\multirow{2}{*}{ML}   & Recall & 6.39  & 6.39  & 6.39  & 6.39   \\
                      & NDCG   & 3.16  & 3.05  & 3.21  & 3.11   \\ \midrule
\multirow{2}{*}{Food} & Recall & 9.20  & 9.13  & 9.17  & 8.55   \\
                      & NDCG   & 4.59  & 4.56  & 4.53  & 4.33   \\ \bottomrule
\end{tabular}
}
\end{wraptable}

\textbf{Integration depth.}
Tab.~\ref{tab:tstep} varies the number of RK4 steps $T$. ML stays nearly constant across $T$, and Food stays level up to $T=8$ and declines at $T=16$. Two properties of the solver make short integration sufficient. Each RK4 step evaluates the field at four generated states, so $T=1$ already constrains the endpoint through composed predictions. The final-stage coupling of Remark~\ref{rem:traj_endpoint} carries the weight $(6T)^{-2}$, so deeper integration dilutes the item-end constraint of the trajectory term, while training cost grows with the $4TB$ trajectory evaluations of the complexity analysis. Every listed dataset in Tab.~\ref{tab:best_hpo} accordingly selects at most four RK4 steps.

\subsection{Field Capacity and Static Ranking}\label{sec:velocity_width}

PACF uses the field as a regressor whose residuals train the embeddings, so fit quality and ranking quality can separate. Tab.~\ref{tab:transport_width} measures both as the width $d_h$ grows. Writing $\Phi_1(\mathbf{p}\mid\mathbf{q},y)$ for the RK4 endpoint from $\mathbf{p}$ under the condition $(\mathbf{q},y)$, we normalize the endpoint error by the initial distance:
\begin{equation}
\rho^+_u=\frac{\|\Phi_1(\mathbf{p}_u\mid\mathbf{q}_{i^+},+1)-\mathbf{q}_{i^+}\|_2}{\|\mathbf{q}_{i^+}-\mathbf{p}_u\|_2},
\qquad
\rho^-_u=\frac{\|\Phi_1(\mathbf{p}_u\mid\mathbf{q}_{i^-},+1)-\mathbf{q}_{i^-}\|_2}{\|\mathbf{q}_{i^-}-\mathbf{p}_u\|_2}.
\label{eq:transport_residual_ratios}
\end{equation}
Both ratios integrate the positive-label trajectory. The ratio $\rho^+_u$ measures arrival at the observed item, and $\rho^-_u$ measures how far the same trajectory carries the user toward a sampled negative item. Each entry averages the per-checkpoint median over all evaluated users across the five seeds.

\begin{table}[!t]
\centering
\setlength{\tabcolsep}{2pt}
\caption{Hidden width $d_h$ of the velocity network on ML and Food, with residual ratios and Recall@10 and NDCG@10 in percent as five-seed mean $\pm$ std. Endpoint fit tightens as the field widens while ranking changes non-monotonically, because PACF ranks with static embeddings.}
\label{tab:transport_width}
\resizebox{\linewidth}{!}{
\begin{tabular}{lcccccccc}
\toprule
& \multicolumn{4}{c}{ML} & \multicolumn{4}{c}{Food} \\
$d_h$ & $\rho^+$ & $\rho^-$ & Recall & NDCG
& $\rho^+$ & $\rho^-$ & Recall & NDCG \\ \midrule
$64$ & $0.5363\pm0.0699$ & $0.4833\pm0.0706$ & $6.36\pm0.66$ & $3.09\pm0.29$
& $0.3116\pm0.0162$ & $1.1969\pm0.3042$ & $8.99\pm0.22$ & $4.39\pm0.12$ \\
$128$ & $0.2554\pm0.0512$ & $0.2474\pm0.0471$ & $6.10\pm1.10$ & $2.84\pm0.47$
& $0.2457\pm0.0047$ & $2.0545\pm0.2827$ & $9.16\pm0.19$ & $4.47\pm0.08$ \\
$256$ & $0.1624\pm0.0012$ & $0.1883\pm0.0069$ & $6.56\pm0.79$ & $3.04\pm0.37$
& $0.2042\pm0.0094$ & $1.2037\pm0.3624$ & $9.35\pm0.09$ & $4.59\pm0.06$ \\
$512$ & $0.1390\pm0.0066$ & $0.1755\pm0.0109$ & $6.59\pm0.73$ & $3.18\pm0.32$
& $0.1461\pm0.0168$ & $0.2491\pm0.0318$ & $9.19\pm0.05$ & $4.48\pm0.04$ \\ \bottomrule
\end{tabular}}
\end{table}

The endpoint fit improves monotonically with width on both datasets, while ranking changes non-monotonically, and on Food a narrower field ranks better than the widest one at a larger $\rho^+$. Eq.~\eqref{eq:flow_grad_residual} explains this separation. Embedding updates scale with the flow residual, and Prop.~\ref{prop:item_end} bounds the residuals of co-users by a coefficient that falls as the state sensitivity $L_{\mathbf h}$ grows. A more expressive field fits the displacements with smaller residuals and sends smaller updates into the embeddings, so a tighter endpoint fit can weaken the signal that reaches the scorer. The widest field also brings users close to sampled negative items under a supplied positive label, while $\rho^-$ exceeds one on Food at narrower widths. An endpoint readout then finds every candidate close to its own endpoint, which explains the weak endpoint readouts in Tab.~\ref{tab:inference_readout}. Endpoint residuals therefore diagnose the conditional regression, and ranking metrics evaluate the static inner-product scorer that PACF deploys.

\subsection{Scoring and Path Geometry}

\begin{table}[!t]
\centering
\caption{Inner-product and cosine scoring under straight and spherical-geodesic reference paths on ML and Food, measured by Recall@10 and NDCG@10 in percent as five-seed mean $\pm$ std. The variants score closely, so the pathwise losses of PACF carry over to cosine scoring and geodesic paths.}
\label{tab:cosine_geometry}
\setlength{\aboverulesep}{0pt}
\setlength{\belowrulesep}{0pt}
\renewcommand{\arraystretch}{1.2}
\resizebox{\linewidth}{!}{
\begin{tabular}{ll|cccc}
\toprule
\multicolumn{2}{c|}{Variant} & \multicolumn{2}{c}{ML} & \multicolumn{2}{c}{Food} \\
Score & Path & Recall & NDCG & Recall & NDCG \\ \midrule
Inner product & Straight & $7.48\pm0.41$ & $3.73\pm0.03$ & $9.75\pm0.08$ & $4.79\pm0.05$ \\
Cosine & Straight & $7.74\pm0.96$ & $3.71\pm0.50$ & $10.01\pm0.09$ & $4.93\pm0.03$ \\
Cosine & Geodesic & $7.18\pm0.48$ & $3.42\pm0.23$ & $9.93\pm0.27$ & $4.91\pm0.08$ \\ \bottomrule
\end{tabular}}
\end{table}

\begin{table}[!t]
\centering
\caption{Learned item geometry under each scorer and reference path on ML and Food. With the inner product and straight paths, PACF keeps item norms widely spread and closely tied to popularity.}
\label{tab:cosine_geometry_diagnostics}
\setlength{\aboverulesep}{0pt}
\setlength{\belowrulesep}{0pt}
\renewcommand{\arraystretch}{1.2}
\resizebox{\linewidth}{!}{
\begin{tabular}{ll|cccccc}
\toprule
\multicolumn{2}{c|}{Variant} & \multicolumn{3}{c}{ML} & \multicolumn{3}{c}{Food} \\
Score & Path & Norm std. & Correlation & Effective rank & Norm std. & Correlation & Effective rank \\ \midrule
Inner product & Straight & $0.1890$ & $+0.6900$ & $311.8$ & $0.8892$ & $+0.9967$ & $371.6$ \\
Cosine & Straight & $0.0780$ & $+0.6342$ & $111.9$ & $0.0322$ & $+0.9232$ & $394.0$ \\
Cosine & Geodesic & $0.0111$ & $+0.2513$ & $101.4$ & $0.0153$ & $-0.0151$ & $400.7$ \\ \bottomrule
\end{tabular}}
\end{table}

PACF regresses along straight paths in $\mathbb{R}^d$ and ranks with inner products, so embedding norms enter both the pathwise targets and the score. Tab.~\ref{tab:cosine_geometry} replaces the scorer with cosine similarity and then the path with a spherical geodesic. Tab.~\ref{tab:cosine_geometry_diagnostics} measures item geometry by the standard deviation of item norms, their correlation with popularity, and the effective rank of the normalized spectrum.

Cosine scoring with straight paths raises the mean on Food and on ML Recall, and the inner product keeps the highest ML NDCG, so the pathwise losses carry over to a cosine scorer. Straight paths keep the embedding norms in their states and targets, so the regression uses radial information while a cosine scorer orders items by angle. Geodesics between normalized directions remove this radial information. Item norms collapse to nearly one value, their correlation with popularity weakens on ML and vanishes on Food, and cosine scoring with geodesics falls below cosine scoring with straight paths in every column. With the inner product and straight paths, item norms stay spread and track popularity, so the MF scorer keeps a popularity component that cosine scoring discards.

\subsection{Robustness to Sparsity and Noise}

\begin{figure}[!t]
\centering
\begin{subfigure}[b]{0.495\linewidth}
  \includegraphics[width=\linewidth]{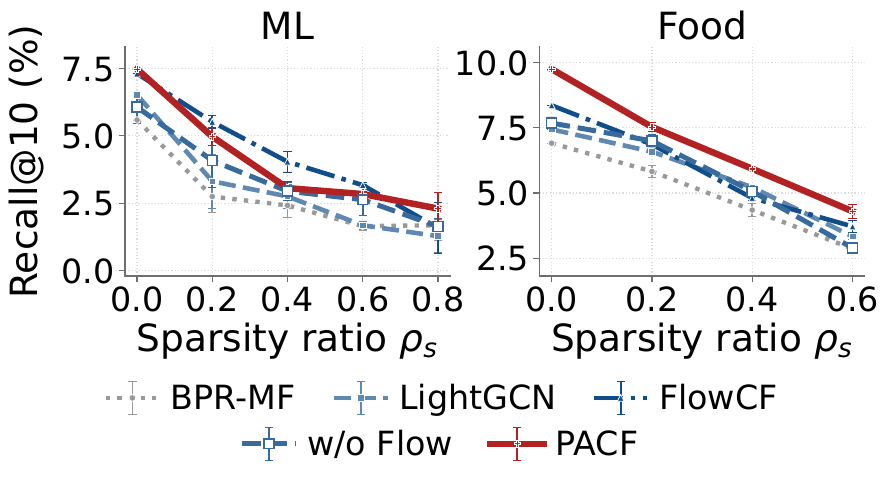}
  \caption{Data sparsity.}
  \label{fig:robustness_sparsity_recall}
\end{subfigure}\hfill
\begin{subfigure}[b]{0.495\linewidth}
  \includegraphics[width=\linewidth]{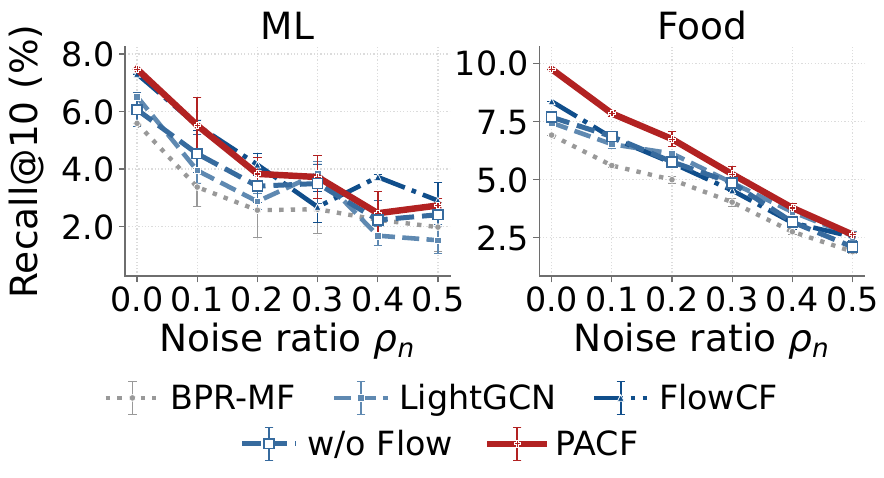}
  \caption{Replacement noise.}
  \label{fig:robustness_noise_recall}
\end{subfigure}
\caption{Robustness of PACF and baselines to interaction removal in panel a and item replacement in panel b on ML and Food, measured by Recall@10 in percent. Error bars show seed standard deviations. PACF stays above w/o Flow at every level and leads all compared models on Food.}
\label{fig:robustness_recall}
\end{figure}

Fig.~\ref{fig:robustness_recall} repeats the perturbation study of Fig.~\ref{fig:robustness_main} under Recall@10. On ML, FlowCF leads at intermediate removal levels, and under replacement LightGCN leads at one level while FlowCF leads or ties at the others. PACF stays above w/o Flow at every level of both perturbations on both datasets and has the highest Recall at every level on Food, the sparser dataset with fewer interactions per user.

Removal reduces the observed pairs of each user, and Eq.~\eqref{eq:ranking_loss} constrains a user embedding only through these pairs. Every retained pair still enters the shared regression, so the item-end bound of Prop.~\ref{prop:item_end} keeps acting on the users who share each remaining item, and the gap to w/o Flow measures this contribution. Replacement keeps the number of pairs and corrupts their associations, so part of the flow targets point to random items. PACF stays above w/o Flow at every replacement level, so the shared regression keeps its benefit when some targets are wrong. ML is the densest dataset, where each user keeps the most constraints of its own, and it also shows the smallest gain of PACF over BPR-MF in the main comparison, which matches its weaker lead under perturbation.

\subsection{Computational Cost}\label{sec:efficiency}

\begin{table}[!t]
\centering
\caption{Computational cost of nine methods on six datasets, with training time in seconds per epoch, inference latency in milliseconds per user, and peak GPU memory in GB. PACF keeps the latency of an inner-product scorer and trains fastest on average after BPR-MF on Sports and Clothing.}
\label{tab:efficiency}
\resizebox{\linewidth}{!}{
\begin{tabular}{clccccccccc}
\toprule
\multicolumn{1}{l}{Dataset} & Metric          & BPR-MF            & NCF               & NeuMF             & LightGCN          & MultiVAE          & RecVAE            & DiffRec           & FlowCF            & PACF              \\ \midrule
\multirow{3}{*}{ML}         & Training   & $0.96 \pm 1.13$   & $4.31 \pm 0.30$   & $3.98 \pm 0.34$   & $1.29 \pm 1.16$   & $1.04 \pm 0.18$   & $1.10 \pm 0.21$   & $1.09 \pm 0.19$   & $7.12 \pm 1.21$   & $2.33 \pm 0.20$   \\
                            & Inference  & $0.02 \pm 0.00$   & $0.26 \pm 0.01$   & $0.38 \pm 0.00$   & $0.03 \pm 0.00$   & $0.13 \pm 0.00$   & $0.46 \pm 0.07$   & $25.36 \pm 0.13$  & $0.68 \pm 0.05$   & $0.07 \pm 0.00$   \\
                            & GPU      & $<0.01 \pm <0.01$ & $0.10 \pm 0.00$   & $0.10 \pm 0.00$   & $<0.01 \pm <0.01$ & $0.30 \pm 0.00$   & $0.40 \pm 0.00$   & $0.40 \pm 0.00$   & $0.30 \pm 0.00$   & $1.10 \pm 0.00$   \\ \midrule
\multirow{3}{*}{Food}       & Training   & $0.39 \pm 0.46$   & $1.37 \pm 0.37$   & $1.35 \pm 0.37$   & $0.46 \pm 0.54$   & $0.46 \pm 0.69$   & $0.51 \pm 0.63$   & $0.38 \pm 0.43$   & $1.50 \pm 0.49$   & $0.86 \pm 0.67$   \\
                            & Inference  & $0.03 \pm 0.00$   & $0.23 \pm 0.05$   & $0.33 \pm 0.01$   & $0.03 \pm 0.00$   & $0.14 \pm 0.01$   & $0.45 \pm 0.01$   & $27.17 \pm 0.12$  & $0.68 \pm 0.01$   & $0.07 \pm 0.00$   \\
                            & GPU      & $<0.01 \pm <0.01$ & $0.10 \pm 0.00$   & $0.10 \pm 0.00$   & $<0.01 \pm <0.01$ & $0.20 \pm 0.00$   & $0.30 \pm 0.00$   & $0.20 \pm 0.00$   & $0.10 \pm 0.00$   & $0.90 \pm 0.00$   \\ \midrule
\multirow{3}{*}{Dance}      & Training   & $0.64 \pm 0.74$   & $2.48 \pm 0.63$   & $2.94 \pm 0.72$   & $0.91 \pm 0.89$   & $0.91 \pm 0.85$   & $0.81 \pm 0.78$   & $0.74 \pm 0.74$   & $4.97 \pm 0.93$   & $1.96 \pm 0.90$   \\
                            & Inference  & $0.03 \pm 0.00$   & $0.23 \pm 0.01$   & $0.34 \pm 0.03$   & $0.03 \pm 0.00$   & $0.14 \pm 0.00$   & $0.45 \pm 0.01$   & $25.78 \pm 0.11$  & $0.68 \pm 0.03$   & $0.07 \pm 0.00$   \\
                            & GPU      & $<0.01 \pm <0.01$ & $0.10 \pm 0.00$   & $0.10 \pm 0.00$   & $0.10 \pm 0.00$   & $0.20 \pm 0.00$   & $0.30 \pm 0.00$   & $0.30 \pm 0.00$   & $0.20 \pm 0.00$   & $0.90 \pm 0.00$   \\ \midrule
\multirow{3}{*}{Baby}       & Training   & $2.87 \pm 4.31$   & $7.40 \pm 4.17$   & $6.25 \pm 3.56$   & $2.69 \pm 4.05$   & $3.37 \pm 4.29$   & $3.88 \pm 4.44$   & $3.85 \pm 4.31$   & $8.70 \pm 3.78$   & $4.67 \pm 4.19$   \\
                            & Inference  & $0.03 \pm 0.00$   & $0.30 \pm 0.00$   & $0.46 \pm 0.00$   & $0.03 \pm 0.00$   & $0.14 \pm 0.00$   & $0.46 \pm 0.01$   & $45.33 \pm 0.11$  & $0.66 \pm 0.01$   & $0.07 \pm 0.00$   \\
                            & GPU      & $<0.01 \pm <0.01$ & $0.10 \pm 0.00$   & $0.10 \pm 0.00$   & $0.10 \pm 0.00$   & $1.00 \pm 0.00$   & $1.00 \pm 0.00$   & $1.10 \pm 0.00$   & $0.50 \pm 0.00$   & $0.90 \pm 0.00$   \\ \midrule
\multirow{3}{*}{Sports}     & Training   & $10.20 \pm 19.00$ & $18.56 \pm 18.59$ & $19.84 \pm 19.90$ & $11.75 \pm 19.55$ & $13.58 \pm 19.23$ & $11.69 \pm 15.66$ & $13.19 \pm 18.33$ & $28.40 \pm 19.68$ & $11.13 \pm 15.66$ \\
                            & Inference  & $0.03 \pm 0.00$   & $0.47 \pm 0.01$   & $0.74 \pm 0.00$   & $0.03 \pm 0.00$   & $0.28 \pm 0.00$   & $0.45 \pm 0.01$   & $81.37 \pm 0.09$  & $0.95 \pm 0.07$   & $0.07 \pm 0.00$   \\
                            & GPU      & $0.10 \pm 0.00$   & $0.10 \pm 0.00$   & $0.20 \pm 0.00$   & $0.30 \pm 0.00$   & $3.60 \pm 0.00$   & $3.60 \pm 0.00$   & $4.10 \pm 0.00$   & $1.20 \pm 0.00$   & $1.10 \pm 0.00$   \\ \midrule
\multirow{3}{*}{Clothing}   & Training   & $10.99 \pm 22.05$ & $20.81 \pm 26.45$ & $21.96 \pm 28.02$ & $14.28 \pm 26.66$ & $17.53 \pm 28.40$ & $14.72 \pm 22.00$ & $17.13 \pm 26.60$ & $31.11 \pm 28.31$ & $13.98 \pm 22.75$ \\
                            & Inference  & $0.03 \pm 0.00$   & $0.59 \pm 0.01$   & $0.87 \pm 0.00$   & $0.04 \pm 0.00$   & $0.52 \pm 0.00$   & $0.45 \pm 0.00$   & $96.84 \pm 0.11$  & $0.82 \pm 0.09$   & $0.07 \pm 0.00$   \\
                            & GPU      & $0.10 \pm 0.00$   & $0.10 \pm 0.00$   & $0.20 \pm 0.00$   & $0.30 \pm 0.00$   & $4.80 \pm 0.00$   & $4.80 \pm 0.00$   & $5.50 \pm 0.00$   & $1.60 \pm 0.00$   & $0.90 \pm 0.00$   \\ \bottomrule
\end{tabular}
}
\end{table}

Training evaluates the velocity network $1+K_{-}+4T$ times per positive pair, and inference scores with Eq.~\eqref{eq:pred} at $O(d)$ per pair and $O(|\mathcal{I}|d)$ per user, the operations of MF. Tab.~\ref{tab:efficiency} measures every method with its selected configuration, with latency covering top-10 ranking over all items.

The costs follow from how each method scales with the catalog. The field of PACF has $O(d_h(d+d_h))$ parameters, independent of $|\mathcal{U}|$ and $|\mathcal{I}|$, and its cost per positive pair is fixed by $K_{-}$ and $T$. PACF accordingly has the lowest mean epoch time among the eight methods other than BPR-MF on Sports and Clothing, the two largest datasets, and its peak training memory stays near one gigabyte on every dataset. MultiVAE, RecVAE, DiffRec, and FlowCF process an interaction vector of length $|\mathcal{I}|$ per user, so their memory grows with the catalog and exceeds that of PACF on Sports and Clothing. Epoch times on the Amazon datasets vary widely across epochs, so the comparison uses means. At inference, PACF keeps the latency of an inner-product scorer, like BPR-MF and LightGCN. NCF and NeuMF run an MLP on every user-item pair, the variational autoencoders encode and decode each interaction vector, and DiffRec and FlowCF generate scores iteratively, DiffRec most slowly.

\subsection{Selected Hyperparameters}\label{sec:sensitivity}

\begin{table}[!t]
\centering
\caption{Selected PACF hyperparameters on ML, Food, Dance, Baby, Sports, and Clothing. Every dataset selects positive flow, trajectory, and negative-target weights and at most four RK4 steps, so the validation search keeps all pathwise terms of PACF active with short RK4 integration.}
\label{tab:best_hpo}
\begin{tabular}{lccccccc}
\toprule
Dataset  & $d$ & $d_h$ & $T$ & $\eta$              & $\lambda$ & $\beta$ & $\mu_-$ \\ \midrule
ML       & 256 & 128 & 1 & $4.83\times 10^{-2}$  & 0.86      & 0.25    & 0.71    \\
Food     & 512 & 64  & 4 & $2.26\times 10^{-2}$  & 0.41      & 0.18    & 0.34    \\
Dance    & 512 & 512 & 1 & $1.40\times 10^{-2}$  & 0.97      & 0.17    & 0.23    \\
Baby     & 512 & 64  & 4 & $1.18\times 10^{-2}$  & 0.21      & 0.44    & 0.76    \\
Sports   & 256 & 512 & 1 & $3.29\times 10^{-4}$  & 0.98      & 0.30    & 0.99    \\
Clothing & 512 & 256 & 4 & $9.66\times 10^{-5}$  & 0.69      & 0.11    & 0.05    \\ \bottomrule
\end{tabular}
\end{table}

Tab.~\ref{tab:best_hpo} lists the configuration that validation selects for each dataset, namely the embedding dimension $d$, hidden width $d_h$, RK4 steps $T$, learning rate $\eta$, flow weight $\lambda$, trajectory weight $\beta$, and negative-target scale $\mu_-$. The search ranges of $\beta$ and $\mu_-$ include zero, and every listed dataset selects positive values for both, so validation keeps the trajectory term and the signed negative regression. The flow weight has a positive lower bound by design. Through Eq.~\eqref{eq:sign_factor}, $\mu_-$ sets both the size of the negative target and the weight $\mu_-^2$ of the reparameterized negative regression, and its selected value varies widely, down to a small value on Clothing. Every dataset selects at most four RK4 steps, in line with the integration-depth analysis, and the two largest datasets select smaller learning rates.

\section{Theoretical Analysis}\label{sec:extended_theory}\label{app:proofs}

\subsection{Straight Reference Path}\label{sec:ot_path_proof}

The reference displacement of a pair $(u,i)$ is $\mathbf{q}_i-\mathbf{p}_u$, and the segment in Eq.~\eqref{eq:path} traverses it at the constant velocity $\dot{\mathbf{h}}_t=\mathbf{q}_i-\mathbf{p}_u$~\citep{liu2023flow,park2024constant}. Among absolutely continuous paths $\mathbf{h}:[0,1]\to\mathcal{M}$ with $\mathbf{h}_0=\mathbf{p}_u$ and $\mathbf{h}_1=\mathbf{q}_i$, this segment uniquely minimizes the kinetic energy $\mathcal{C}[\mathbf{h}]=\int_0^1\|\dot{\mathbf{h}}_t\|_2^2\,dt$ in Euclidean space~\citep{villani2008optimal,finlay2020train}. Its Euclidean interpolation follows Rectified Flow~\citep{liu2023flow} and flow matching~\citep{lipman2022flow} in embedding space. The endpoints give $\mathbf{q}_i-\mathbf{p}_u=\int_0^1\dot{\mathbf{h}}_t\,dt$~\citep{geng2025mean}. Jensen's inequality for the strictly convex $\|\cdot\|_2^2$ then gives $\|\mathbf{q}_i-\mathbf{p}_u\|_2^2\leq\mathcal{C}[\mathbf{h}]$, with equality exactly when $\dot{\mathbf{h}}_t$ is constant. The boundary conditions fix this constant at $\mathbf{q}_i-\mathbf{p}_u$, and integration from $\mathbf{p}_u$ recovers Eq.~\eqref{eq:path}.

For $t<1$, Eq.~\eqref{eq:path} gives $\mathbf{p}_u=(\mathbf{h}_t-t\mathbf{q}_i)/(1-t)$, so the signed target is the function $\mathbf{f}^{\star}$ of the complete network input in Eq.~\eqref{eq:target_function}. Under the positive label, $\mathbf{f}^{\star}$ returns $\mathbf{q}_i-\mathbf{p}_u$ at every state of the segment, so Eq.~\eqref{eq:node} with $\mathbf{f}^{\star}$ in place of $\mathbf{v}_{\boldsymbol\theta}$ has the solution in Eq.~\eqref{eq:path} for $0\leq t<1$, which converges to $\mathbf{q}_i$ as $t\to1$. The negative label reverses this velocity and scales it by $\mu_-$, so Eq.~\eqref{eq:target_vector} keeps the reference displacement of every pair and lets the label set its direction and magnitude. The last RK4 stage evaluates the field at $t=1$, and App.~\ref{sec:item_end_proof} derives the constraint it places on the users of an item.

\subsection{Positive-Score Monotonicity and Ranking Alignment}\label{sec:motivation_analysis}\label{sec:alignment_ranking}

Along the positive path $\mathbf{h}^{(u,i^+)}_t=(1-t)\mathbf{p}_u+t\mathbf{q}_{i^+}$, define the positive score $\psi(t)\triangleq\langle\mathbf{h}^{(u,i^+)}_t,\mathbf{q}_{i^+}\rangle$, which is the score $s(u,i^+)$ of Eq.~\eqref{eq:pred} with the user embedding moved to the path state.

\begin{proposition}[Pairwise Ranking Alignment]
\label{prop:pairwise_alignment}
Fix $\mathbf{q}_{i^+}$ and $\mathbf{q}_{i^-}$, and let $\ell_{\mathrm{BPR}}(t)=-\log\sigma\bigl(\langle\mathbf{h}^{(u,i^+)}_t,\mathbf{q}_{i^+}-\mathbf{q}_{i^-}\rangle\bigr)$. Then $\ell_{\mathrm{BPR}}$ is non-increasing in $t$ exactly when this condition holds:
\begin{equation}
\label{eq:pairwise_condition}
\langle\mathbf{q}_{i^+}-\mathbf{p}_u,\,\mathbf{q}_{i^+}-\mathbf{q}_{i^-}\rangle\geq0.
\end{equation}
\end{proposition}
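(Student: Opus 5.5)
The plan is a direct one-variable calculus argument: write the margin inside the sigmoid as an affine function of $t$, then compose it with the strictly decreasing map $x\mapsto-\log\sigma(x)$. Set $\mathbf{w}=\mathbf{q}_{i^+}-\mathbf{q}_{i^-}$ and $m(t)=\langle\mathbf{h}^{(u,i^+)}_t,\mathbf{w}\rangle$. Eq.~\eqref{eq:path} gives $\mathbf{h}^{(u,i^+)}_t=\mathbf{p}_u+t(\mathbf{q}_{i^+}-\mathbf{p}_u)$, so
\begin{equation*}
m(t)=\langle\mathbf{p}_u,\mathbf{w}\rangle+t\,c,\qquad c=\langle\mathbf{q}_{i^+}-\mathbf{p}_u,\,\mathbf{w}\rangle .
\end{equation*}
Thus $m$ is affine in $t$ with constant slope $c$. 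This slope is exactly the left side of Eq.~\eqref{eq:pairwise_condition}.

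Next, differentiate $g(x)=-\log\sigma(x)$ to get $g'(x)=-\sigma(-x)<0$ for every real $x$. The chain rule then yields
\begin{equation*}
\ell_{\mathrm{BPR}}'(t)=-\sigma(-m(t))\,c .
\end{equation*}
Because $\sigma(-m(t))>0$ everywhere, the sign of $\ell_{\mathrm{BPR}}'(t)$ is the sign of $-c$, and this sign does not depend on $t$.

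For the equivalence, suppose first that $c\ge0$. Then $\ell_{\mathrm{BPR}}'\le0$ on the whole domain, so the mean value theorem gives that $\ell_{\mathrm{BPR}}$ is non-increasing. Conversely, suppose $c<0$. Then $\ell_{\mathrm{BPR}}'>0$ everywhere, so $\ell_{\mathrm{BPR}}$ is strictly increasing on any interval of positive length and is therefore not non-increasing.

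The domain should be stated explicitly, namely $t\in[0,1]$ or all of $\mathbb{R}$; the argument is identical in both cases. The only point needing care is the converse direction. It holds because the slope of $m$ is constant and $g$ is strictly monotone, so monotonicity cannot change partway along the segment. Apart from that, the proof is routine.
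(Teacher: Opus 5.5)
Your proposal is correct and takes essentially the same route as the paper. Both write the margin $\langle\mathbf{h}^{(u,i^+)}_t,\mathbf{q}_{i^+}-\mathbf{q}_{i^-}\rangle$ as affine in $t$ with slope $\langle\mathbf{q}_{i^+}-\mathbf{p}_u,\mathbf{q}_{i^+}-\mathbf{q}_{i^-}\rangle$ and compose it with the strictly decreasing map $-\log\sigma$. The paper reaches this slope as $\psi'(t)-\langle\mathbf{q}_{i^+}-\mathbf{p}_u,\mathbf{q}_{i^-}\rangle$ and leaves the converse implicit, whereas your explicit derivative $-\sigma(-m(t))\,c$ and strict-monotonicity argument spell that direction out.
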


\begin{proof}
Eq.~\eqref{eq:path} makes $\psi$ and the margin $\Delta(t)=\langle\mathbf{h}^{(u,i^+)}_t,\mathbf{q}_{i^+}-\mathbf{q}_{i^-}\rangle$ linear in $t$, with slopes:
\begin{equation}
\label{eq:phi_derivative}
\psi'(t)=\|\mathbf{q}_{i^+}\|_2^2-\langle\mathbf{p}_u,\mathbf{q}_{i^+}\rangle,
\qquad
\Delta'(t)=\psi'(t)-\langle\mathbf{q}_{i^+}-\mathbf{p}_u,\mathbf{q}_{i^-}\rangle.
\end{equation}
As $-\log\sigma$ strictly decreases, $\ell_{\mathrm{BPR}}$ is non-increasing exactly when $\Delta'\geq0$, which is Eq.~\eqref{eq:pairwise_condition}.
\end{proof}

By Eq.~\eqref{eq:phi_derivative}, the positive score is non-decreasing in $t$ if and only if the user embedding satisfies:
\begin{equation}
\label{eq:monotonicity_condition}
\langle\mathbf{p}_u,\mathbf{q}_{i^+}\rangle\leq\|\mathbf{q}_{i^+}\|_2^2,
\end{equation}
and strictly increasing under strict inequality. Equal norms $\|\mathbf{p}_u\|_2=\|\mathbf{q}_{i^+}\|_2$ satisfy this condition because $\psi'(t)$ then equals $\tfrac12\|\mathbf{q}_{i^+}-\mathbf{p}_u\|_2^2$. For a differentiable loss with $\partial\ell/\partial s(u,i^+)<0$ and fixed competing scores, the chain rule $d\ell/dt=(\partial\ell/\partial s(u,i^+))\,\psi'(t)$ makes $\ell$ non-increasing under Eq.~\eqref{eq:monotonicity_condition}, which covers BPR, binary cross-entropy on the positive pair, and sampled softmax. The margin slope in Eq.~\eqref{eq:phi_derivative} shows that Eq.~\eqref{eq:monotonicity_condition} implies Eq.~\eqref{eq:pairwise_condition} whenever $\langle\mathbf{q}_{i^+}-\mathbf{p}_u,\mathbf{q}_{i^-}\rangle\leq0$.

\textbf{Measured condition.}
On retuned checkpoints, adding the pathwise losses reduces the violation rate $\Pr[\psi'<0]$ of the positive-score slope in Eq.~\eqref{eq:phi_derivative} from $0.044804$ to $0.000074$ on ML and from $0.002392$ to $0.001233$ on Food, so almost every measured pair satisfies Eq.~\eqref{eq:monotonicity_condition}. For these pairs, moving the user toward the positive item with competing scores fixed lowers every loss that decreases in the positive score, and the pairwise margin follows the alignment condition of Eq.~\eqref{eq:pairwise_condition}.

\subsection{Conditional Flow Regularity}\label{sec:conditional_flow_regularity}

The state space is $\mathcal{M}=\mathbb{R}^d$. The exact-flow results below assume that for every fixed item and label $(\mathbf{q}_i,y_{ui})$ the field $\mathbf{v}_{\boldsymbol\theta}$ is continuously differentiable in $\mathbf{h}\in\mathcal{M}$, continuous in $t\in[0,1]$, and $L_{\mathbf h}$-Lipschitz in $\mathbf{h}$ with one state sensitivity $L_{\mathbf h}$ over all states, times, and item-label contexts.

We use $\varphi_{s\to t}$ for the exact map from time $s$ to time $t$ and write $\varphi_t=\varphi_{0\to t}$. The corresponding finite-step RK4 map is $\Phi_{s\to t}$, with $\Phi_1=\Phi_{0\to1}$ in the training objective. For a fixed item and label, the exact trajectory satisfies the conditional initial-value problem over the complete interval:
\begin{equation}
\label{eq:conditional_node_full}
\dot{\widetilde{\mathbf{h}}}^{(u,i)}_t
=\mathbf{v}_{\boldsymbol\theta}(\widetilde{\mathbf{h}}^{(u,i)}_t,t,\mathbf{q}_i,y_{ui}),
\qquad
\widetilde{\mathbf{h}}^{(u,i)}_0=\mathbf{p}_u.
\end{equation}
The exact map from the initial state $\mathbf{p}_u$ to time $t$ has the following equivalent integral form:
\begin{equation}
\label{eq:flow_map_def_cond}
\varphi_t(\mathbf{p}_u,\mathbf{q}_i,y_{ui})
\triangleq
\mathbf{p}_u
+
\int_{0}^{t}
\mathbf{v}_{\boldsymbol\theta}\!\left(\widetilde{\mathbf{h}}^{(u,i)}_\tau,\, \tau,\, \mathbf{q}_i,\, y_{ui}\right)\,d\tau.
\end{equation}

\begin{remark}[Exact Conditional Flow]
\label{rem:conditional_flow}
Under this regularity condition, the Picard--Lindel\"of theorem gives Eq.~\eqref{eq:conditional_node_full} a unique solution on $[0,1]$ for every $\mathbf{p}_u\in\mathcal{M}$, and the same argument backward in time makes $\varphi_t(\cdot,\mathbf{q}_i,y_{ui})$ a bijection with inverse $\varphi_{t\to0}$. Its Jacobian $\mathbf{J}(t)=\partial\varphi_t/\partial\mathbf{p}$ solves $\dot{\mathbf{J}}(t)=(\partial\mathbf{v}_{\boldsymbol\theta}/\partial\mathbf{h})\mathbf{J}(t)$ with $\mathbf{J}(0)=\mathbf{I}$, and Liouville's formula $\det\mathbf{J}(t)=\exp\int_0^t\tr(\partial\mathbf{v}_{\boldsymbol\theta}/\partial\mathbf{h})\,d\tau>0$ keeps it invertible, so $\varphi_t$ is a global $C^1$-diffeomorphism of the state space $\mathcal{M}$ for every $t\in[0,1]$.
\end{remark}

For two users with the same item and label, Gr\"{o}nwall's inequality applied to Eq.~\eqref{eq:flow_map_def_cond} gives $\|\varphi_1(\mathbf{p}_u,\mathbf{q}_i,y)-\varphi_1(\mathbf{p}_v,\mathbf{q}_i,y)\|_2\leq e^{L_{\mathbf h}}\|\mathbf{p}_u-\mathbf{p}_v\|_2$. A negative one-sided Lipschitz constant turns this bound into a contraction that pulls the exact endpoints of the two users together.

\begin{figure}[!t]
\centering
\begin{minipage}[t]{0.49\linewidth}
  \centering
  \includegraphics[width=\linewidth]{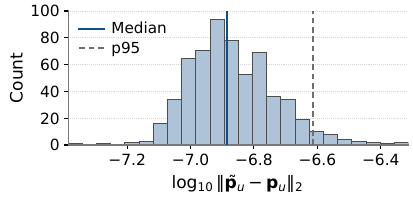}
  \caption{Distribution of RK4 round-trip residuals over ML users at one PACF checkpoint. The solid and dashed lines mark the median and the 95th percentile. The residuals stay below $10^{-6}$ for every user, so the RK4 map that PACF trains is nearly invertible for a fixed item and label.}
  \label{fig:diffeomorphism_errors}
\end{minipage}\hfill
\begin{minipage}[t]{0.49\linewidth}
  \centering
  \includegraphics[width=0.69\linewidth]{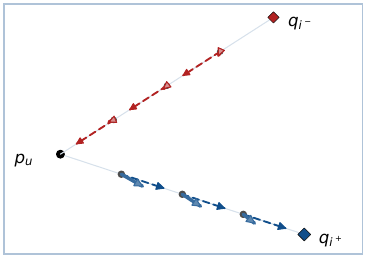}
  \caption{Learned field and BCF target for positive and negative ML pairs at a PACF checkpoint. Blue and red mark positive and negative pairs, dashed and solid arrows the target $\mathbf{u}_t$ and the field $\mathbf{v}_{\boldsymbol\theta}$, dots the sampled states, and diamonds the items. The field closely follows both signed targets, and the small remaining gaps are the residuals that update PACF's user embeddings.}
  \label{fig:bcf_alignment}
\end{minipage}
\end{figure}

Fig.~\ref{fig:diffeomorphism_errors} integrates each ML user embedding forward and back with eight RK4 steps, $\tilde{\mathbf{p}}_u=(\Phi_{1\to 0}\circ\Phi_{0\to 1})(\mathbf{p}_u)$, and plots the distribution of $\log_{10}\|\tilde{\mathbf{p}}_u-\mathbf{p}_u\|_2$ over users at one PACF checkpoint. The round trip returns every user close to its starting point for a fixed item and label, so the finite-step map $\Phi_1$ that PACF trains inherits the invertibility that Remark~\ref{rem:conditional_flow} establishes for the exact flow.

Because $\mathcal{L}_{\mathrm{traj}}$ averages $\|\Phi_1(\mathbf{p}_u)-\mathbf{q}_{i^+}\|_2^2$ over training positives, Jensen's inequality gives $\mathbb{E}_{(u,i^+)}\|\Phi_1(\mathbf{p}_u)-\mathbf{q}_{i^+}\|_2\leq\sqrt{\mathcal{L}_{\mathrm{traj}}}$ for the mean RK4 endpoint distance. Candidate ranking uses the static score in Eq.~\eqref{eq:pred}, which BPR trains directly and both pathwise losses shape.

\subsection{Shared-Field Regression and Local Consistency}\label{sec:segment_crossing}

PACF fits the signed targets of all pairs with one network $\mathbf{v}_{\boldsymbol\theta}$, so an update driven by one pair changes the predictions and residuals at the inputs of other pairs, including other users of the same item.

Let $\mathcal{D}$ be the distribution of training tuples $(u,i,y_{ui},t)$ from Alg.~\ref{alg:training}, with $t\sim\Unif[0,1]$ independent of $(u,i,y_{ui})$. Under $\mathcal{D}$, the state $\mathbf{h}_t=(1-t)\mathbf{p}_u+t\mathbf{q}_i$, target $\mathbf{u}_t=\mu(y_{ui})(\mathbf{q}_i-\mathbf{p}_u)$, and complete input $\mathbf{X}\triangleq(\mathbf{h}_t,t,\mathbf{q}_i,y_{ui})$ are random vectors. All conditional expectations use $\supp(\mathbf{X})$. Letting the network range over all measurable fields $\mathbf{v}:\supp(\mathbf{X})\to\mathcal{M}$, Eq.~\eqref{eq:flow_loss} becomes the population risk:
\begin{equation}
\label{eq:flow_population}
\mathcal{L}_{\mathrm{flow}}(\mathbf{v}) = \mathbb{E}_{\mathcal{D}}\!\left[\,\bigl\|\mathbf{v}(\mathbf{X}) - \mathbf{u}_t\bigr\|_2^2\,\right].
\end{equation}

Over all measurable fields, Eq.~\eqref{eq:flow_population} is minimized by the conditional mean $\mathbf{v}^*(\mathbf{x})=\mathbb{E}_{\mathcal{D}}[\mathbf{u}_t\mid\mathbf{X}=\mathbf{x}]=\mu(y)\bigl(\mathbf{q}-\mathbb{E}_{\mathcal{D}}[\mathbf{p}_u\mid\mathbf{X}=\mathbf{x}]\bigr)$ for $\mathbf{X}$-almost every $\mathbf{x}=(\mathbf{h},t,\mathbf{q},y)$, and for $t<1$ Eq.~\eqref{eq:path} pins $\mathbf{p}_u$ to $\mathbf{p}=(\mathbf{h}-t\mathbf{q})/(1-t)$, giving $\mathbf{v}^*(\mathbf{x})=\mu(y)(\mathbf{q}-\mathbf{p})$. For $t<1$, the complete input therefore fixes the target function in Eq.~\eqref{eq:target_function}, and the population optimum matches each target pointwise.

\textbf{Signed regression.}
For $\mu_->0$ and either label, every field output $\mathbf{v}$ satisfies the identity:
\begin{equation}
\label{eq:sign_factor}
\bigl\|\mathbf{v}-\mu(y)(\mathbf{q}_i-\mathbf{p}_u)\bigr\|_2^2=\mu(y)^2\,\bigl\|\mathbf{v}/\mu(y)-(\mathbf{q}_i-\mathbf{p}_u)\bigr\|_2^2.
\end{equation}
Under the output reparameterization $\mathbf{g}=\mathbf{v}/\mu(y)$, each flow term regresses $\mathbf{g}$ onto the unsigned displacement with weight $\mu(y)^2$, which is $1$ for positive and $\mu_-^2$ for negative pairs. The label thus sets the target direction, and $\mu_-$ acts as a quadratic weight of this reparameterized regression.

Cross-pair coupling requires a bound on how much the shared prediction varies. If $\mathbf{v}$ is $L_{\mathbf X}$-Lipschitz in $(\mathbf{h}_t,t,\mathbf{q}_i)$ at each fixed label and $B_{y_0}(\mathbf{x}_0,r)$ is the set of label-$y_0$ inputs whose continuous components lie within distance $r$ of those of $\mathbf{x}_0=(\mathbf{h}_0,t_0,\mathbf{q}_0,y_0)\in\supp(\mathbf{X})$, with $\Pr_{\mathcal{D}}(\mathbf{X}\in B_{y_0}(\mathbf{x}_0,r))>0$, then Jensen's inequality gives $\|\mathbf{v}(\mathbf{x}_0)-\mathbb{E}_{\mathcal{D}}[\mathbf{v}(\mathbf{X})\mid\mathbf{X}\in B_{y_0}(\mathbf{x}_0,r)]\|_2\leq L_{\mathbf X}r$.

This bound relates training pairs whose states, times, and item conditions lie within $r$ under one label, which is stricter than a crossing of two segments. Within such a neighborhood the shared field varies by at most $L_{\mathbf X}r$, while each pair compares the prediction with its own signed target, so their target difference enters their residuals and, through Eq.~\eqref{eq:flow_grad_residual}, their embedding gradients. Users of one item share its condition and label, and their states converge as $t\to1$, which yields Prop.~\ref{prop:item_end}.

Fig.~\ref{fig:bcf_alignment} projects one positive and one negative ML pair onto two principal components with a shared scale and aligned user endpoints. It shows the constant signed targets and the state-dependent predictions at three sampled states per segment, and the gaps between them are the fitting residuals that Eq.~\eqref{eq:flow_grad_residual} maps through the input and target derivatives into the user embedding $\mathbf{p}_u$.

\subsection{Proofs of the Item-End Results}\label{sec:item_end_proof}

The proofs use Eqs.~\eqref{eq:path} and~\eqref{eq:target_vector}. Prop.~\ref{prop:item_end} uses a state-Lipschitz condition, implied by the regularity condition in App.~\ref{sec:conditional_flow_regularity}, and Remark~\ref{rem:traj_endpoint} uses the RK4 step in Eq.~\eqref{eq:rk4_one_step} with step size $1/T$.

\begin{proof}[Proof of Proposition~\ref{prop:item_end}]
\textbf{Step 1. Pointwise bound.}
Write $\mathbf{h}_u=\mathbf{h}^{(u,i)}_t$, $\mathbf{h}_v=\mathbf{h}^{(v,i)}_t$, and $\mathbf{v}_{\boldsymbol\theta}(\mathbf{h})=\mathbf{v}_{\boldsymbol\theta}(\mathbf{h},t,\mathbf{q}_i,y)$. States differ by $(1-t)(\mathbf{p}_u-\mathbf{p}_v)$, and the targets differ by $-\mu(y)(\mathbf{p}_u-\mathbf{p}_v)$, so the target difference splits into the difference of field outputs and the difference of residuals:
\begin{equation}
-\mu(y)(\mathbf{p}_u-\mathbf{p}_v)
=\bigl[\mathbf{v}_{\boldsymbol\theta}(\mathbf{h}_u)-\mathbf{v}_{\boldsymbol\theta}(\mathbf{h}_v)\bigr]-\bigl[\mathbf{e}_u(t)-\mathbf{e}_v(t)\bigr].
\end{equation}
The triangle inequality and the Lipschitz condition on the segment between $\mathbf{h}_u$ and $\mathbf{h}_v$ give $|\mu(y)|\,\|\mathbf{p}_u-\mathbf{p}_v\|_2\leq L_{\mathbf h}(1-t)\,\|\mathbf{p}_u-\mathbf{p}_v\|_2+\|\mathbf{e}_u(t)\|_2+\|\mathbf{e}_v(t)\|_2$. Rearranging gives Eq.~\eqref{eq:item_end_pointwise} when $|\mu(y)|>L_{\mathbf h}(1-t)$, and otherwise Eq.~\eqref{eq:item_end_pointwise} holds trivially because its right side is zero.

\textbf{Step 2. Integration over $t$.}
Since $(a+b)^2\leq2(a^2+b^2)$, Eq.~\eqref{eq:item_end_pointwise} gives the squared bound:
\begin{equation}
\|\mathbf{e}_u(t)\|_2^2+\|\mathbf{e}_v(t)\|_2^2
\geq
\tfrac{1}{2}\bigl(|\mu(y)|-L_{\mathbf h}(1-t)\bigr)_+^2\,\|\mathbf{p}_u-\mathbf{p}_v\|_2^2.
\end{equation}
With $s=1-t$, the positive part vanishes for $s\geq|\mu(y)|/L_{\mathbf h}$, which is at most $1$ because $L_{\mathbf h}\geq|\mu(y)|$, so the expected squared coefficient over $t\sim\Unif[0,1]$ has the following closed form:
\begin{equation}
\mathbb{E}_{t\sim\Unif[0,1]}\bigl(|\mu(y)|-L_{\mathbf h}(1-t)\bigr)_+^2
=\int_0^{|\mu(y)|/L_{\mathbf h}}\bigl(|\mu(y)|-L_{\mathbf h}s\bigr)^2\,ds
=\frac{|\mu(y)|^3}{3L_{\mathbf h}}.
\end{equation}
Taking the expectation of the previous display therefore gives the following pairwise bound:
\begin{equation}
\label{eq:item_end_pair}
\mathbb{E}_t\bigl[\|\mathbf{e}_u(t)\|_2^2+\|\mathbf{e}_v(t)\|_2^2\bigr]
\geq
\frac{|\mu(y)|^3}{6L_{\mathbf h}}\,\|\mathbf{p}_u-\mathbf{p}_v\|_2^2.
\end{equation}

\textbf{Step 3. Dispersion form.}
Let $n=|\mathcal{S}|$. For $n=1$, Eq.~\eqref{eq:item_end_dispersion} has a zero right side. For $n\geq2$, sum Eq.~\eqref{eq:item_end_pair} over unordered pairs. Each user appears $n-1$ times. The identity $\sum_{\{u,v\}\subseteq\mathcal{S}}\|\mathbf{p}_u-\mathbf{p}_v\|_2^2=n\sum_{u\in\mathcal{S}}\|\mathbf{p}_u-\bar{\mathbf{p}}_{\mathcal{S}}\|_2^2$ converts pairwise distances to dispersion, giving the summed bound:
\begin{equation}
(n-1)\,\mathbb{E}_t\sum_{u\in\mathcal{S}}\|\mathbf{e}_u(t)\|_2^2
\geq
\frac{|\mu(y)|^3}{6L_{\mathbf h}}\,n\sum_{u\in\mathcal{S}}\|\mathbf{p}_u-\bar{\mathbf{p}}_{\mathcal{S}}\|_2^2.
\end{equation}
Dividing both sides by $n-1$ and using $n/(n-1)\geq1$ yields Eq.~\eqref{eq:item_end_dispersion} for every $n\geq2$.
\end{proof}

\begin{remark}[Shared Final RK4 Stage]
\label{rem:traj_endpoint}
Fix item $i$, a set $\mathcal{S}$ of users observed with it, and $\mathbf{r}_u=\Phi_1(\mathbf{p}_u\mid\mathbf{q}_i,+1)-\mathbf{q}_i$. If the positive field returns $\mathbf{q}_i-\mathbf{p}_u$ at the reference states of every $u\in\mathcal{S}$ at the stage times $n/T$ and $(n+\frac12)/T$ for $n<T$, the last RK4 stage evaluates the same input $(\mathbf{q}_i,1,\mathbf{q}_i,+1)$ for all these users. With $\mathbf{c}_i$ the output there, $\mathbf{r}_u=(\mathbf{c}_i-\mathbf{q}_i+\mathbf{p}_u)/(6T)$, so every output $\mathbf{c}_i$ gives $\sum_{u\in\mathcal{S}}\|\mathbf{r}_u\|_2^2\geq(6T)^{-2}\sum_{u\in\mathcal{S}}\|\mathbf{p}_u-\bar{\mathbf{p}}_{\mathcal{S}}\|_2^2$, with equality if and only if $\mathbf{c}_i=\mathbf{q}_i-\bar{\mathbf{p}}_{\mathcal{S}}$.
\end{remark}

\textbf{Shared final stage.} This paragraph proves Remark~\ref{rem:traj_endpoint}. Write $t_n=n/T$, and let step $n$ apply Eq.~\eqref{eq:rk4_one_step} on $[t_n,t_{n+1}]$ with step size $1/T$. If $\Phi_{0\to t_n}(\mathbf{p}_u)=\mathbf{h}^{(u,i)}_{t_n}$, which holds at $n=0$, the stages of step $n$ evaluate reference states at $t_n$, at $t_n+\tfrac{1}{2T}$ twice, and at $t_{n+1}$. For $n<T-1$ all four return $\mathbf{q}_i-\mathbf{p}_u$, so $\Phi_{0\to t_{n+1}}(\mathbf{p}_u)=\mathbf{h}^{(u,i)}_{t_{n+1}}$, and induction reaches $t_{T-1}$. In the last step the first three stages return $\mathbf{q}_i-\mathbf{p}_u$, while the fourth evaluates $\mathbf{h}^{(u,i)}_{t_{T-1}}+\tfrac{1}{T}(\mathbf{q}_i-\mathbf{p}_u)=\mathbf{q}_i$ at $t=1$ and returns $\mathbf{c}_i$. Hence $\Phi_1(\mathbf{p}_u)=\mathbf{q}_i+(\mathbf{c}_i-\mathbf{q}_i+\mathbf{p}_u)/(6T)$. Since $\sum_{u\in\mathcal{S}}\|\mathbf{p}_u-\mathbf{a}\|_2^2$ is smallest at $\mathbf{a}=\bar{\mathbf{p}}_{\mathcal{S}}$, the choice $\mathbf{a}=\mathbf{q}_i-\mathbf{c}_i$ gives the bound in Remark~\ref{rem:traj_endpoint}, with equality at $\mathbf{c}_i=\mathbf{q}_i-\bar{\mathbf{p}}_{\mathcal{S}}$. Under this fit one user reaches $\mathbf{q}_i$ exactly, while distinct users keep a positive total endpoint residual. Euler and midpoint steps evaluate the field only at $t<1$, so this coupling comes from the final RK4 stage.

\textbf{Item-end intuition.} If the field gives nearly the same prediction at the states of the users in $\mathcal{S}$, the flow loss is smallest when this prediction equals their mean target $\mu(y)(\mathbf{q}_i-\bar{\mathbf{p}}_{\mathcal{S}})$, and the residual of user $u$ is then $\mu(y)(\mathbf{p}_u-\bar{\mathbf{p}}_{\mathcal{S}})$. The target route of Eq.~\eqref{eq:flow_grad_residual} then contributes $2\mu(y)^2(\mathbf{p}_u-\bar{\mathbf{p}}_{\mathcal{S}})$ to the gradient, so a descent step moves $\mathbf{p}_u$ toward the group mean, while the input route carries the factor $1-t$ and vanishes at the item end. For $y=+1$ the group holds the observed users of $i$, and for $y=-1$ it holds the users whose sampled negatives include $i$, with weight $\mu_-^2\leq1$.

\textbf{State sensitivity.}
For fixed $(t,\mathbf{q},y)$ with $t<1$, $\mathbf{f}^{\star}$ in Eq.~\eqref{eq:target_function} has state Jacobian $-\mu(y)\mathbf{I}/(1-t)$ and state sensitivity $|\mu(y)|/(1-t)$, which exceeds a fixed $L_{\mathbf h}$ exactly when $1-t<|\mu(y)|/L_{\mathbf h}$ and grows without bound as $t\to1$ for $\mu(y)\neq0$. A field matching this Jacobian near the segments requires $L_{\mathbf h}\geq|\mu(y)|$. If $0<L_{\mathbf h}<|\mu(y)|$, integration instead gives the positive coefficient $\bigl(|\mu(y)|^3-(|\mu(y)|-L_{\mathbf h})^3\bigr)/(6L_{\mathbf h})$ multiplying the user dispersion for that item and label.

\subsection{Embedding Gradient Derivations}
\label{sec:gradient_analysis}

\begin{proposition}[Residual-Dependent Flow Gradients]
\label{prop:flow_residual_gradients}
Fix a pair $(u,i)$ with label $y=y_{ui}$ and a time $t\in[0,1)$. Under Eq.~\eqref{eq:path}, the BCF target of Eq.~\eqref{eq:target_vector} is the following function of the network input:
\begin{equation}
\label{eq:target_function}
\mathbf{f}^{\star}(\mathbf{h},t,\mathbf{q},y)=\mu(y)\,\frac{\mathbf{q}-\mathbf{h}}{1-t}.
\end{equation}
Suppose the field is differentiable in its continuous inputs. Let $\mathbf{e}=\mathbf{v}_{\boldsymbol\theta}(\mathbf{h}_t,t,\mathbf{q}_i,y)-\mu(y)(\mathbf{q}_i-\mathbf{p}_u)$ and $\mathbf{J}_{\mathbf h}=\partial\mathbf{v}_{\boldsymbol\theta}/\partial\mathbf{h}$ at this input. With $\boldsymbol\theta$ and $\mathbf{q}_i$ fixed, the squared residual has the user gradient:
\begin{equation}
\label{eq:flow_grad_residual}
\nabla_{\mathbf{p}_u}\|\mathbf{e}\|_2^2=2\bigl[(1-t)\mathbf{J}_{\mathbf h}+\mu(y)\mathbf{I}\bigr]^{\top}\mathbf{e}.
\end{equation}
\end{proposition}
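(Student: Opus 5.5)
The proposition has two parts: an identity for the target as a function of the network input, and a chain-rule computation of the user gradient. Both are direct computations from Eqs.~\eqref{eq:path} and~\eqref{eq:target_vector}.

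\textbf{Target as a function of the input.} For $t<1$ I would invert Eq.~\eqref{eq:path}, exactly as in App.~\ref{sec:ot_path_proof}, to get $\mathbf{p}_u=(\mathbf{h}_t-t\mathbf{q}_i)/(1-t)$. Substituting gives
\[
\mathbf{q}_i-\mathbf{p}_u=\frac{(1-t)\mathbf{q}_i-\mathbf{h}_t+t\mathbf{q}_i}{1-t}=\frac{\mathbf{q}_i-\mathbf{h}_t}{1-t}.
\]
Multiplying by $\mu(y)$ yields Eq.~\eqref{eq:target_function} on the segment.

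\textbf{User gradient.} With $\boldsymbol\theta$, $\mathbf{q}_i$, $t$ and $y$ fixed, the residual $\mathbf{e}$ depends on $\mathbf{p}_u$ through two routes. The first is the state route: the network reads $\mathbf{h}_t$, and $\partial\mathbf{h}_t/\partial\mathbf{p}_u=(1-t)\mathbf{I}$. The second is the target route: here $\partial[-\mu(y)(\mathbf{q}_i-\mathbf{p}_u)]/\partial\mathbf{p}_u=\mu(y)\mathbf{I}$. The label $y$ and the item input $\mathbf{q}_i$ do not move with $\mathbf{p}_u$, so no other terms appear. The chain rule then gives the Jacobian of the residual,
\[
\partial\mathbf{e}/\partial\mathbf{p}_u=(1-t)\mathbf{J}_{\mathbf h}+\mu(y)\mathbf{I}.
\]
Finally, $\nabla_{\mathbf{p}}\|\mathbf{e}\|_2^2=2(\partial\mathbf{e}/\partial\mathbf{p})^\top\mathbf{e}$ yields Eq.~\eqref{eq:flow_grad_residual}.

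There is no real obstacle. The one point of care is bookkeeping: the target must be differentiated as $\mu(y)(\mathbf{q}_i-\mathbf{p}_u)$ in the embedding variables, not through $\mathbf{f}^{\star}$, which would double count. Differentiability at $t<1$ is assumed, and the transpose placement is standard.
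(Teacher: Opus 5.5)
Your proposal is correct and follows the paper's proof essentially verbatim: you invert Eq.~\eqref{eq:path} to obtain Eq.~\eqref{eq:target_function}, then apply the chain rule with state derivative $(1-t)\mathbf{I}$ and target derivative $-\mu(y)\mathbf{I}$. One small correction to your closing caution: differentiating $\mathbf{f}^{\star}(\mathbf{h}_t,t,\mathbf{q}_i,y)$ through $\mathbf{h}_t$ also gives $-\mu(y)\mathbf{I}/(1-t)\cdot(1-t)=-\mu(y)\mathbf{I}$, so that route is equivalent and would double count only if you added it on top of the direct $\mathbf{p}_u$ dependence.
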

Both gradients are linear in the residual, so an exactly fitted pair contributes no flow gradient. The user gradient combines the state route $(1-t)\mathbf{J}_{\mathbf h}$ with the target route $\mu(y)\mathbf{I}$, and the item gradient also differentiates the item condition of the network. The label sets the target direction, and after the output rescaling in Eq.~\eqref{eq:sign_factor}, negative-pair regression carries weight $\mu_-^2$ relative to positive pairs.

\begin{proof}[Proof of Proposition~\ref{prop:flow_residual_gradients}]
Solving Eq.~\eqref{eq:path} gives $\mathbf{p}_u=(\mathbf{h}_t-t\mathbf{q}_i)/(1-t)$ and hence Eq.~\eqref{eq:target_function}. The state derivative with respect to $\mathbf{p}_u$ is $(1-t)\mathbf{I}$, while the target derivative is $-\mu(y)\mathbf{I}$. Applying the chain rule to $\|\mathbf{e}\|_2^2$ gives Eq.~\eqref{eq:flow_grad_residual}. With $\mathbf{J}_{\mathbf q}=\partial\mathbf{v}_{\boldsymbol\theta}/\partial\mathbf{q}$ at the same input, the item gradient is $\nabla_{\mathbf{q}_i}\|\mathbf{e}\|_2^2=2\bigl[t\mathbf{J}_{\mathbf h}+\mathbf{J}_{\mathbf q}-\mu(y)\mathbf{I}\bigr]^{\top}\mathbf{e}$, which adds the direct derivative through the item condition.
\end{proof}

\begin{proof}[Proof of Proposition~\ref{prop:user_gradient_decomposition}]
Eq.~\eqref{eq:pred} gives $\Delta s=\langle\mathbf{p}_u,\mathbf{q}_{i^+}-\mathbf{q}_{i^-}\rangle$. For the loss $\ell_{\mathrm{rec}}=-\log\sigma(\Delta s)$, the chain rule starts from the derivative with respect to the margin $\Delta s$, which takes the form:
\begin{equation}
\frac{\partial\ell_{\mathrm{rec}}}{\partial\Delta s}
=-\sigma(-\Delta s).
\end{equation}
Since $\partial\Delta s/\partial\mathbf{p}_u=\mathbf{q}_{i^+}-\mathbf{q}_{i^-}$, the chain rule gives the BPR contribution to the user update:
\begin{equation}
-\nabla_{\mathbf{p}_u}\ell_{\mathrm{rec}}
=\sigma(-\Delta s)(\mathbf{q}_{i^+}-\mathbf{q}_{i^-}).
\end{equation}

The flow contribution is $-\lambda\nabla_{\mathbf{p}_u}\ell_{\mathrm{flow}}$, where $\nabla_{\mathbf{p}_u}\ell_{\mathrm{flow}}$ sums Eq.~\eqref{eq:flow_grad_residual} over the flow terms of the triplet. For the trajectory contribution, use the endpoint residual $\mathbf{r}_u$ and integration Jacobian $\mathbf{J}_{\Phi}$ from Prop.~\ref{prop:user_gradient_decomposition}. Differentiating $\ell_{\mathrm{traj}}=\|\mathbf{r}_u\|_2^2$ with $\mathbf{q}_{i^+}$ and $\boldsymbol\theta$ fixed gives the trajectory contribution:
\begin{equation}
-\beta\nabla_{\mathbf{p}_u}\ell_{\mathrm{traj}}
=-2\beta\mathbf{J}_{\Phi}(\mathbf{p}_u)^\top\mathbf{r}_u.
\end{equation}
Adding the three negative gradients gives Eq.~\eqref{eq:grad_traj_refined}, the decomposition stated in Prop.~\ref{prop:user_gradient_decomposition}.
\end{proof}

\end{document}